% This is samplepaper.tex, a sample chapter demonstrating the
% LLNCS macro package for Springer Computer Science proceedings;
% Version 2.21 of 2022/01/12
%
\documentclass[runningheads]{llncs}

\usepackage[T1]{fontenc}
% T1 fonts will be used to generate the final print and online PDFs,
% so please use T1 fonts in your manuscript whenever possible.
% Other font encondings may result in incorrect characters.
%
\usepackage{cite}
\usepackage{amsfonts}
\usepackage[ruled,vlined]{algorithm2e}
\usepackage{algpseudocode}

\usepackage{graphicx}
\usepackage{textcomp}
\usepackage{xcolor}
\usepackage{microtype}
\usepackage{graphicx}
\usepackage{subfigure}
\usepackage{booktabs} % for professional tables
\usepackage{paralist}
\usepackage[utf8]{inputenc}

\usepackage[shortlabels]{enumitem}
\usepackage{lipsum, adjustbox}
\usepackage{multirow}
\usepackage{multicol}

\usepackage{pgf-umlsd}
\usepackage{breqn}
\usepackage{etoolbox}
\AtBeginEnvironment{figure}{\setlength{\textfloatsep}{8pt}\setlength{\intextsep}{8pt}\setlength{\abovecaptionskip}{4pt}\setlength{\belowcaptionskip}{4pt}}

\setlength{\textfloatsep}{4pt} % Space between floats and text above and below them
\setlength{\intextsep}{4pt}   % Space between floats and text when float is inside text
\setlength{\abovecaptionskip}{2pt} % Space above caption of a float
\setlength{\belowcaptionskip}{2pt} % Space below caption of a float

% \usepackage[font=small,skip=5pt]
% \usepackage{titlesec}
% \titleformat{\section}
%   {\normalfont\Large\bfseries}{\thesection}{0.2em}{}

% \usepackage[font=small, skip=2pt]{caption}
% \captionsetup{belowskip=-20pt}
\usepackage{color,soul}
\usepackage[font=scriptsize,labelfont=bf]{caption}
\usepackage{balance}
\usepackage{lipsum, adjustbox}
\usepackage[colorlinks,breaklinks]{hyperref} 
\hypersetup{
	hypertexnames=true, linkcolor=blue, anchorcolor=black,
	citecolor=magenta, urlcolor=alizarin
}

\newsavebox\myboxA
\newsavebox\myboxB
\newlength\mylenA
\newcommand{\overbar}[1]{\mkern 1.5mu\overline{\mkern-1.5mu#1\mkern-1.5mu}\mkern 1.5mu}

\AtBeginDocument{%
  \providecommand\BibTeX{{%
    \normalfont B\kern-0.5em{\scshape i\kern-0.25em b}\kern-0.8em\TeX}}}

\usepackage{color, colortbl}
\definecolor{Gray}{gray}{0.9}
\definecolor{airforceblue}{rgb}{0.36, 0.54, 0.66}
\definecolor{aliceblue}{rgb}{0.94, 0.97, 1.0}
\definecolor{alizarin}{rgb}{0.82, 0.1, 0.26}
\definecolor{amber}{rgb}{1.0, 0.75, 0.0}
\definecolor{amber(sae/ece)}{rgb}{1.0, 0.49, 0.0}
\definecolor{bronze}{rgb}{0.8, 0.5, 0.2}
\definecolor{battleshipgrey}{rgb}{0.52, 0.52, 0.51}
\definecolor{bole}{rgb}{0.47, 0.27, 0.23}
\definecolor{bulgarianrose}{rgb}{0.28, 0.02, 0.03}
\definecolor{cadet}{rgb}{0.33, 0.41, 0.47}
\definecolor{ceil}{rgb}{0.57, 0.63, 0.81}
\definecolor{cerulean}{rgb}{0.0, 0.48, 0.65}
\definecolor{charcoal}{rgb}{0.21, 0.27, 0.31}
\definecolor{coolblack}{rgb}{0.0, 0.18, 0.39}
\definecolor{coolgrey}{rgb}{0.55, 0.57, 0.67}
\definecolor{darkcandyapplered}{rgb}{0.64, 0.0, 0.0}
\definecolor{darkbrown}{rgb}{0.4, 0.26, 0.13}
\definecolor{darkcerulean}{rgb}{0.03, 0.27, 0.49}
\definecolor{darkgray}{rgb}{0.66, 0.66, 0.66}
\definecolor{darkjunglegreen}{rgb}{0.1, 0.14, 0.13}
\definecolor{darktaupe}{rgb}{0.28, 0.24, 0.2}
\definecolor{davy\'sgrey}{rgb}{0.33, 0.33, 0.33}
\definecolor{frenchblue}{rgb}{0.0, 0.45, 0.73}
\definecolor{almond}{rgb}{0.94, 0.87, 0.8}
\definecolor{beaublue}{rgb}{0.74, 0.83, 0.9}
\definecolor{beige}{rgb}{0.96, 0.96, 0.86}
\definecolor{bisque}{rgb}{1.0, 0.89, 0.77}
\definecolor{black}{rgb}{0.0, 0.0, 0.0}
\definecolor{fluorescentorange}{rgb}{1.0, 0.75, 0.0}
\definecolor{ghostwhite}{rgb}{0.97, 0.97, 1.0}
\definecolor{antiquewhite}{rgb}{0.98, 0.92, 0.84}
\definecolor{LightCyan}{rgb}{0.88,1,1}

\usepackage[textsize=tiny]{todonotes}

\def\BibTeX{{\rm B\kern-.05em{\sc i\kern-.025em b}\kern-.08em
    T\kern-.1667em\lower.7ex\hbox{E}\kern-.125emX}}
% Used for displaying a sample figure. If possible, figure files should
% be included in EPS format.
%
% If you use the hyperref package, please uncomment the following two lines
% to display URLs in blue roman font according to Springer's eBook style:
%\usepackage{color}
%\renewcommand\UrlFont{\color{blue}\rmfamily}
%% Save the class definition of \subparagraph
\let\llncssubparagraph\subparagraph
%% Provide a definition to \subparagraph to keep titlesec happy
\let\subparagraph\paragraph
%% Load titlesec
\usepackage[compact]{titlesec}
%% Revert \subparagraph to the llncs definition
\let\subparagraph\llncssubparagraph
\usepackage[compact]{titlesec}         % you need this package
\titlespacing{\section}{0pt}{0pt}{0pt} % this reduces space between (sub)sections to 0pt, for example
\AtBeginDocument{%                     % this will reduce spaces between parts (above and below) of texts within a (sub)section to 0pt, for example - like between an 'eqnarray' and text
	\setlength\abovedisplayskip{0pt}
	\setlength\belowdisplayskip{0pt}}
\begin{document}
%
% \title{Unveiling the Secrets: Privacy-Preserving Training through Split Learning and Homomorphic Encryption%\thanks{Supported by organization x.}
% }
\title{A More Secure Split: Enhancing the Security of Privacy-Preserving Split Learning\thanks{This work was funded by the Technology Innovation Institute (TII) for the project ARROWSMITH and from Horizon Europe for HARPOCRATES (101069535).}
}
%
% \titlerunning{Unveiling the Secrets}
\titlerunning{A More Secure Split}
% If the paper title is too long for the running head, you can set
% an abbreviated paper title here
\author{Tanveer Khan\inst{1}\orcidID{0000-0001-7296-2178} \and Khoa Nguyen\inst{1}\orcidID{0000-0001-5089-5250}  \and
Antonis Michalas\inst{1,2}\orcidID{0000-0002-0189-3520}}

\authorrunning{T. Khan et al.}
% First names are abbreviated in the running head.
% If there are more than two authors, 'et al.' is used.

\institute {Tampere University, Finland \and RISE Research Institutes of Sweden \\
% \email{lncs@springer.com}\\
% \url{http://www.springer.com/gp/computer-science/lncs} \and
% ABC Institute, Rupert-Karls-University Heidelberg, Heidelberg, Germany\\
% \\
\email{\{tanveer.khan, khoa.nguyen, antonios.michalas\}@tuni.fi}}
% \author{First Author\inst{1}\orcidID{0000-1111-2222-3333} \and
% Second Author\inst{2,3}\orcidID{1111-2222-3333-4444} \and
% Third Author\inst{3}\orcidID{2222--3333-4444-5555}}
% %
% \authorrunning{F. Author et al.}
% % First names are abbreviated in the running head.
% % If there are more than two authors, 'et al.' is used.
% %
% \institute{Princeton University, Princeton NJ 08544, USA \and
% Springer Heidelberg, Tiergartenstr. 17, 69121 Heidelberg, Germany
% \email{lncs@springer.com}\\
% \url{http://www.springer.com/gp/computer-science/lncs} \and
% ABC Institute, Rupert-Karls-University Heidelberg, Heidelberg, Germany\\
% \email{\{abc,lncs\}@uni-heidelberg.de}}
%
\maketitle              % typeset the header of the contribution
\begin{abstract}
Split learning (SL) is a new collaborative learning technique that allows participants, e.g.\ a client and a server, to train machine learning models without the client sharing raw data. In this setting, the client initially applies its part of the machine learning model on the raw data to generate Activation Maps (AMs) and then sends them to the server to continue the training process. Previous works in the field demonstrated that reconstructing AMs could result in privacy leakage of client data. In addition to that, existing mitigation techniques that overcome the privacy leakage of SL prove to be significantly worse in terms of accuracy. In this paper, we improve upon previous works by constructing a protocol based on U-shaped SL that can operate on homomorphically encrypted data. More precisely, in our approach, the client applies homomorphic encryption on the AMs before sending them to the server, thus protecting user privacy. This is an important improvement that reduces privacy leakage in comparison to other SL-based works. Finally, our results show that, with the optimum set of parameters, training with HE data in the U-shaped SL setting only reduces accuracy by 2.65\% compared to training on plaintext. In addition, raw training data privacy is preserved.

\keywords{Activation Maps \and Homomorphic Encryption \and Machine Learning \and Privacy \and Split Learning}
\end{abstract}
\section{Introduction}

% Machine Learning (ML) models have attracted global adulation and are used in a plethora of applications such as medical diagnosis, pattern recognition, and credit risk assessment. However, applications and services using ML are often breaching user privacy. As a result, the need to preserve the confidentiality and privacy of individuals and maintain user trust has gained extra attention. This is not only because of the technological advancements that privacy-preserving machine learning (PPML) can offer, but also due to its potential societal impact (i.e.\ building fairer, democratic and unbiased societies). 

Nowadays, machine learning (ML) methods are widely used in many applications due to their predictive and generative power. However, this raises serious concerns regarding user data privacy, leading to the need for privacy-preserving machine learning (PPML) solutions~\cite{khan2021blind}. Split Learning (SL) and Federated Learning (FL) are two PPML methods that rely on training ML models on decentralized data sources~\cite{vepakomma2019reducing}. In FL~\cite{yang2019federated}, every client runs a copy of the entire model on its data. The server receives updated weights from each client and aggregates them. The SL~\cite{gupta2018distributed} model divides the Neural Network (NN) into two parts: the client-side and the server-side. SL is used for training NN among multiple data sources, while mitigating the need to directly share raw labeled data with collaboration parties. 
The advantages of SL are multifold: \begin{inparaenum}[\it (i)] \item it allows multiple parties to collaboratively train a NN, \item it allows users to train ML models without sharing their raw data with a server running part of a NN model, thus preserving user privacy, \item it protects both the client and the server from revealing their parts of the model, and \item it reduces the client's computational overhead by not running the entire model~\cite{vepakomma2019split}.\end{inparaenum}

Though SL offers an extra layer of privacy protection by definition, there are no works exploring how it is combined with popular techniques that promise to preserve user privacy (e.g.\ encryption). In~\cite{abuadbba2020can}, the authors studied whether SL can handle sensitive time-series data and demonstrated that SL alone is \textit{insufficient} when performing privacy-preserving training for 1-dimensional (1D) CNN models. More precisely, the authors showed raw data can be reconstructed from the AMs of the intermediate split layer. The authors also employed two mitigation techniques, adding hidden layers and applying differential privacy to reduce privacy leakage. However, based on the results, none of these techniques can effectively reduce privacy leakage from all channels of the SL activation. Furthermore, both these techniques result in significantly reducing the joint model's accuracy.

In this paper, we focus on training an ML model in a privacy-preserving manner, where a client and a server collaborate to train the model. More specifically, we construct a model that uses Homomorphic Encryption (HE)~\cite{cheon2017homomorphic} to mitigate privacy leakage in SL. In our model, the client first encrypts the AMs and then sends the Encrypted Activation Maps (EAMs) to the server. The EAMs do \textit{not} reveal anything about the raw data (i.e.\ it is \textit{not} possible to reconstruct the original raw data from the EAM).

%\paragraph{\textbf{Contributions:}}
%\label{par:ourcontribution}

\medskip 
\noindent \textbf{\textit{Contributions:}} \enskip
The main contributions of this paper are:  
\begin{enumerate}[\bfseries C1.]
    \item We designed a simplified version of the 1D CNN model presented in~\cite{abuadbba2020can} and we are using it to classify the ECG signals~\cite{moody2001impact} in both local and SL settings. More specifically, we construct a U-shaped split 1D CNN model and experiment using plaintext AMs sent from the client to the server. Through the U-shaped 1D CNN model, clients do \textit{not} need to share the input training samples and the ground truth labels with the server -- this is an important improvement that reduces privacy leakage compared to~\cite{abuadbba2020can}.
    \item We constructed the HE version of the U-shaped SL technique. In the encrypted U-shaped SL model, the client encrypts the AM using HE and sends it to the server. The advantage of HE encrypted U-shaped SL over the plaintext U-shaped SL is that server performs computation over EAMs.
    \item To assess the applicability of our framework, we performed experiments on two heartbeat datasets: the MIT-DB~\cite{moody2001impact} and the PTB-XL~\cite{wagner2020ptb}, with PTB-XL currently being the largest open-source electrocardiography dataset to our knowledge. For the MIT-DB dataset, we experimented with AMs of two lengths (256 and 512) for both plaintext and homomorphically EAMs and we have measured the model's performance by considering training duration test accuracy, and communication cost. We performed similar experiments with PTB-XL dataset, however, only with AMs of length 256. %due to computational constraints.
    \item Moreover, our framework takes advantage of batch encryption, an optimization technique for memory and computation, to improve computing performance over encrypted data. We conducted experiments with and without batch encryption and compared results.
    \item We designed a detailed protocol to prove our construction's and provide proof of its security level under the malicious threat model.
\end{enumerate}

%\paragraph{Organization}
%\label{par:organization}
\medskip 

\noindent \textbf{\textit{Organization:}} \enskip
	The rest of the paper is organized as follows\footnote{Due to space constraints, the necessary background information about 1D CNN, HE and SL are in~\autoref{sec:preliminaries}.}: 
 % In \autoref{sec:preliminaries}, we provide the necessary background information about 1D CNN, HE and SL. %, needed for our construction. 
	In \autoref{sec:relatedwork}, we present important published works in the area of SL. 
	The architectures of the proposed models are presented in \autoref{sec:architecture}. 
	The design and implementation of split 1D CNN training protocols are described in \autoref{sec:slProtocol}, formal protocol construction in \autoref{sec:protocol}, protocol security in \autoref{app:secanalysis}, extensive experimental results in \autoref{sec:performance},
% 	In \autoref{sec:privacyLeakageAnal}, we analyse the privacy of our protocols, 
and conclude the paper in~\autoref{sec:Conclusion}.

\section{Related Work}
\label{sec:relatedwork}
One of the primary reasons researchers seek novel techniques is to bridge the large gap between existing privacy solutions and the actual practical deployment of NNs. 
%Furthermore, it is critical to comprehend the wireless network's inadequacies in terms of privacy. 
%Indeed, the field of PPML has received substantial research in cloud computing that operates mainly in the wireless setting.
% it is important to study whether existing methods can be used in the context of privacy in wireless networks.
PPML consists of cryptographic approaches such as HE and Multiparty Computation (MPC), differential privacy as well as distributed ML approaches such as FL, and SL~\cite{cabrero2021sok}. \href{https://ai.googleblog.com/2017/04/federated-learning-collaborative.html}{Google AI Blog}
introduced FL, where users (e.g. mobile devices) collaboratively train a model~\cite{mcmahan2017communication} under a central server's orchestration (e.g. service provider) without sharing their data. However, in FL, sharing user model weights with server can lead to sensitive information leaks~\cite{hitaj2017deep}.

%In this technique, the central server selects a set of users and sends them the initial model's parameters. Each user implements its local data for training purposes, sends the local update to the central server, while the central server aggregates the final model. However, raw data is not shared in this approach while the architecture, parameters and the intermediate representation of the model (weights, activations, and gradients) are shared between the users and the server during the learning process. The authors in~\cite{hitaj2017deep} demonstrated that, if the architecture and parameters are not fully protected, an adversary can reconstruct the sensitive information by simply running the collaborative learning algorithm.%
	
% The SL approach proposed by Gupta and Raskar~\cite{gupta2018distributed} offers a number of significant advantages over FL. Similar to FL, SL does \textit{not} share raw data. In addition, it has the benefit of \textit{not} disclosing the model's architecture and weights. For example, ~\cite{gupta2018distributed} predicted that reconstructing raw data on the client-side, while using SL would be difficult. In addition, the authors of~\cite{vepakomma2019split} employed the SL model to the healthcare applications to protect the users' personal data. Vepakomma \textit{et al.} found that SL outperforms FL in terms of accuracy~\cite{vepakomma2019split}.

% Initially, it was believed that 
SL approach~\cite{gupta2018distributed} %proposed by Gupta and Raskar~\cite{gupta2018distributed} 
is a promising approach in terms of client raw data protection, however, SL provides data privacy on the grounds that only intermediate AMs are shared between the parties. Different studies showed the possibility of privacy leakage in SL. In~\cite{vepakomma2019reducing}, the authors analyzed the privacy leakage of SL and found a considerable leakage from the split layer in the 2D~CNN model. Furthermore, the authors mentioned that it is possible to reduce the distance correlation (a measure of dependence) between the split layer and raw data by slightly scaling the weights of all layers before the split. This scaling works well in models with a large number of hidden layers before the split. 

The work of Abuadbba \textit{et al.}~\cite{abuadbba2020can} is the first study exploring whether SL can deal with time-series data. The authors proved that only SL cannot preserve the privacy of the data, and employed two techniques to resolve this privacy problem. However, both suffer from a loss of model accuracy, with the use of \textit{differential privacy degrading the classification accuracy significantly from 98.9\% to 50\%.}

\section{Architecture}
\label{sec:architecture}
In this section, we first describe the non-split version or local model of the 1D CNN used to classify the ECG signal. Then, we discuss the process of splitting this local model into a U-shaped split model. Furthermore, we also describe the involved parties (a client and a server) in the training process of the split model, focusing on their roles and the parameters assigned to them throughout the training process. Notations for all parameters and their descriptions is in \autoref{table: paranddes}.

\begin{table}
    \centering
    \caption{Parameters and Description in the Algorithms}
    \resizebox{\textwidth}{!}{%
    \label{table: paranddes}
    \scriptsize
    \begin{tabular}{l|l|l|l|l}
        \hline
        \#	& ML & Description & HE & Description\\
        &Parameters&&Parameters&\\   \hline
        1	& \textit{D}  & Dataset & $\mathcal{P}$ & Polynomial modulus\\ \hline
        2	& $\mathbf{x}$, $\mathbf{y}$ & Input data samples and ground-truth labels &  $\mathcal{C}$ & Coefficient modulus\\ \hline
        3	& \textit{n, N} & Batch size and number of batches to be trained & $\Delta$  & Scaling factor\\ \hline
        4	& $\boldsymbol{w}^{i}, \boldsymbol{b}^{i}$ & Weights and biases in layer \textit{i}& CKKS  &Encryption scheme \\ \hline
        5	& $f^{i}$ & Linear or convolution operation of layer \textit{i}& $\mathsf{pk}$ & Public key \\ \hline
        6	& $g^{i}$ & Activation function of layer \textit{i} & $\mathsf{sk}$  & Secret key \\  \hline
        7	& $\mathbf{a}^{i}$ & Output activation maps of $g^{i}$ & $\mathsf{HE.Enc}$  & Homomorphic encryption  \\ \hline
        8	& $\mathbf{z}^{i}$ & Output tensor of $f^{i}$ & $\mathsf{HE.Dec}$  & Homomorphic decryption \\ \hline
        9	& $\eta$ & Learning rate &$\mathsf{HE.Eval}$ & Homomorphic evaluation \\  \hline
        10	& $\boldsymbol{\Phi}$ & Model's weights & $\mathsf{ctx_{pri}}$ & Private context \\ \hline
        11	&$E$& Number of training epochs  & $\mathsf{ctx_{pub}}$ & Public context \\  \hline
        12	& $\mathcal{L}$, $J$ & Loss function and error & $\bar{\mathbf{a}}^{i}$  & Encrypted activation maps    \\ \hline
        13	& $O$ & Optimizer  &  $\bar{\mathbf{z}}^{i}$ &  Encrypted tensor \\ \hline
    \end{tabular}}
\end{table}

\subsection{1D CNN Local Model Architecture}
\label{subsec:local_model}
We first implement and successfully reproduce the local model results~\cite{abuadbba2020can}. This model contains two Conv1D layers and two FC layers. The optimal test accuracy that this model achieves is 98.9\%. We implement a simplified version where the model has one less FC layer compared to the model from~\cite{abuadbba2020can}. Our local model consists of all the layer of~\autoref{fig:u-shapedSL} without any split between the client and the server. As can be seen in \autoref{fig:u-shapedSL}, we limit our model to two Conv1D layers and one linear layer as we aim to reduce computational costs when HE is applied on AMs in the model's split version. Reducing the number of FC layers leads to a drop in the accuracy of the model. The best test accuracy we obtained after training our local model for 10 epochs with a \textit{n} of 4 is 92.84\%. \textit{Although reducing the number of layers affects the model's accuracy, it is not within our goals to demonstrate how successful our ML model is for this task; instead, our focus is to construct a split model where training and evaluation on encrypted data are comparable to training and evaluation on plaintext data.} We also apply the simplified 1D CNN  on the PTB-XL dataset, with a small modification due to the difference in the number of input channels compared to the dataset from~\cite{moody2001impact}. The training result on the PTB-XL dataset after 10 epochs with a \textit{n} of 4 is 74.01\%, with the best test accuracy of 67.36\%. In \autoref{sec:performance}, we detail results for the non-split version and compare them with split version.%\footnote{Due to space constraints the training process of the local 1D CNN is given in~\autoref{sec:local1DCNN}. }.

The training process of the local 1D CNN can be described as following: Suppose we have a heartbeat data sample $\mathbf{x} \in \mathbb{R}^{c}$, where $c$ is the number of input features or the number of time steps. $\mathbf{x}$ belongs to one out of $m$ ground-truth classes. Each data sample $\mathbf{x}$ has a corresponding encoded label vector  $\mathbf{y}\in \mathbb{R}^{m}$ that represents its ground-truth class. We can write the 1D CNN as a function $f_{\boldsymbol{\Phi}}$, where $\boldsymbol{\Phi}$ is a set of adjustable parameters as denoted in~\autoref{table: paranddes}. $\Phi$ is first initialized to small random values in the range $[-1, 1]$. Our aim is to find the best set of parameters to map $\mathbf{x}$ to a predicted output vector $\mathbf{\hat{y}} \in \mathbb{R}^{m}$, where $\mathbf{\hat{y}}$ is as close as possible to $\mathbf{y}$. $\mathbf{\hat{y}}$ can be a vector of $m$ probabilities, and we pick $\mathbf{x}$ to belong to the class with the highest probability. To find the closest value of $\mathbf{\hat{y}}$ with respect to $\mathbf{y}$, we try to minimize a loss function $\mathcal{L}({\mathbf{\hat{y}}}, \mathbf{y})$. Training the 1D CNN is an iterative process to find the best $\boldsymbol{\Phi}$ to minimize the loss function. This process consists of two sub-processes called ``forward propagation'' and ``backward propagation''. More specifically, forward propagation moves from the input $\mathbf{x}$ throughout the network, reaching the output layer and produces the predicted output $\mathbf{\hat{y}}$. Conversely, backward propagation moves from the network's output layer back to the input layer to calculate the gradients of the loss function $\mathcal{L}$ w.r.t the weights $\boldsymbol{\Phi}$ of the network. These weights are then updated according to the gradients. The process of calculating the predicted output, the loss function, the gradients and then updating the weights is called ``training''. We train the NN with thousands of samples of $\mathbf{x}$'s and corresponding $\mathbf{y}$'s, through many iterations of forward and backward propagation. We do not train the network on each single data example, but use a number of them at a time (defined by the \textit{n}). The total number of training batches is $N=\frac{|D|}{n}$, where $|D|$ is the size of the dataset. Once the NN goes through all the training batches, it has completed one training epoch. This process repeats for $E$ epochs in total. 

\subsection{U-shaped Split 1D CNN Model}
\label{subsec:u_shaped_split}

In this section, we first present the constructed U-shaped split model. We then report in more detail the roles and access rights of the actors who are involved in the training protocols of the split 1D CNN on both plaintexts and EAMs. The SL protocol consists of two parties: the client and server. We split the local 1D CNN into multiple parts, where each party trains its part(s) and communicates with %each%
others to complete the overall training procedure. More specifically, we construct the U-shaped split 1D CNN in such a way that the first few as well as the last layer are on the client-side, while the remaining layers are on the server-side, as demonstrated in \autoref{fig:u-shapedSL}. The client and server collaborate to train the split model by sharing the AMs and gradients. On the client-side, there are two Conv1D, two Max Pooling, two Leaky ReLU layers, and a Softmax layer. On server-side, there is only one linear layer. As mentioned earlier, the reason for having only one linear layer on the server-side is due to computational constraints when training on encrypted data.

% \begin{figure}
%     \fbox{\begin{minipage}{24em}
%             		\scriptsize
%             \begin{itemize}
%                 \item \textbf{Client-side}
%                 \begin{enumerate}
%                     \item \textit{\underline{Conv1D}}: in\_channels=1 (or 12 for the PTB-XL dataset), out\_channels=16, kernel\_size=7, padding=3, stride=1.
%                     \item \textit{\underline{Activation Function}}: Leaky ReLU.
%                     \item \textit{\underline{Pooling}}: Max, Stride=2.
%                     \item \textit{\underline{Conv1D}}: in\_channels=16, out\_channels= 8 (or 16), kernel\_size=5, padding=2, stride=1.
%                     \item \textit{\underline{Activation Function}}: Leaky ReLU.
%                     \item \textit{\underline{Pooling}}: Max, Stride=2.
%                     \item \textit{\underline{Activation function}}: Softmax 
%                 \end{enumerate}
                
%                 \item \textbf{Server-side:}
%                 \begin{enumerate}
%                     \item \textit{\underline{Linear layer}}: 1 layer
%                 \end{enumerate} 
%             \end{itemize}
%     \end{minipage}}
%     \caption{1D CNN for training and testing phase}
%     \label{fig:convtrain}
% \end{figure}

\subsection{Actors in the Split Learning Model}
\label{subsec:slActors}
As mentioned earlier, in our SL setting, we have two involved parties: the client and the server. Each party plays a specific role and has access to certain parameters. More specifically, their roles and accesses are described as
\begin{itemize}
    \item Client: In the plaintext version, the client holds two Conv1D layers and can access their weights and biases in plaintext. 
    %Other layers (Max Pooling, Leaky ReLU, Softmax) do not have weights and biases. 
    In the HE encrypted version, the client generates the HE context and has access to all context parameters ($\mathcal{P}$, $\mathcal{C}$, $\Delta$, $\mathsf{pk}$ and $\mathsf{sk}$). Note that for both training on plaintext and EAMs, the raw data examples $\mathbf{x}$'s and their corresponding labels $\mathbf{y}$'s reside on the client side and are never sent to the server during the training process.
    \item Server: In our model, the computation performed on the server-side is limited to only one linear layer. Hence, the server can exclusively access the weights and biases of this linear layer. The server also has access to the HE parameters except for the secret key $\mathsf{sk}$. %Regarding the HE context parameters, the server has access to $\mathcal{P}$, $\mathcal{C}$, $\Delta$, and $\mathsf{pk}$ shared by the client, with the exception of the secret key ($\mathsf{sk}$). Not holding the secret key, the server cannot decrypt the HE EAMs sent from the client.% 
    The hyperparameters shared between the client and the server are $\eta$, $n$, $N$, $E$.
\end{itemize}

\section{Split Model Training Protocols}
\label{sec:slProtocol}
We first present the protocol for training the U-shaped split 1D CNN on plaintext AMs, followed by training the U-shaped split 1D CNN on EAMs.

\subsection{Plaintext Activation Maps}
\label{subsection: unencryptedactivation}
We have used ~\autoref{alg:client} and~\autoref{alg:server} to train the U-shaped split 1D CNN reported in \autoref{subsec:u_shaped_split}. First, the client and server start the socket initialization process and synchronize the hyperparameters $\eta, n, N, E$. They also initialize the weights of their layers according to $\Phi$.

During forward propagation , the client forward-propagates the input $\mathbf{x}$ until $l^{th}$ layer and sends $\mathbf{a}^{(l)}$ to the server. The server continues %the%
to forward propagate and sends the output $\mathbf{a}^{(L)}$ to the client. Next, the client applies the Softmax function on $\mathbf{a}^{(L)}$ to get $\mathbf{\hat{y}}$ and calculates the error $J = \mathcal{L}(\mathbf{\hat{y}}, \mathbf{y})$.

The client starts the backward propagation by calculating and sending the gradient of the error w.r.t $\mathbf{a}^{(L)}$, i.e. $\frac{\partial J}{\partial \mathbf{a}^{(L)}}$, to the server. The server continues the backward propagation, calculates $\frac{\partial J}{\partial \mathbf{a}^{(l)}}$ and sends $\frac{\partial J}{\partial \mathbf{a}^{(l)}}$ to the client. After receiving the gradients $\frac{\partial J}{\partial \mathbf{a}^{(l)}}$ from the server, the backward propagation continues to the first hidden layer on the client-side. Note that the exchange of information between client and server in these algorithms takes place in plaintext. As can be seen in~\autoref{alg:client}, the client sends the AMs $\mathbf{a}^{(l)}$ to the server in plaintext and receives the output of the linear layer $\mathbf{a}^{(L)}$ from the server in plaintext. 	
The same applies on the server side: receiving $\mathbf{a}^{(l)}$ and sending $\mathbf{a}^{(L)}$ in the plaintext as can be seen in~\autoref{alg:server}. Sharif \textit{et al.}~\cite{abuadbba2020can} showed that the exchange of plaintext AMs between client and server using SL reveals important information regarding the client's raw data. Later, in~\autoref{subsec:VisualInvert} we show in detail how passing the forward AMs from client to server in the plaintext will result in information leakage. To mitigate this privacy leakage, we propose the protocol, where the client encrypts AMs before sending them to the server (see ~\autoref{subsubsection: encryptedactivation}).

\subsection{Encrypted Activation Maps}
\label{subsubsection: encryptedactivation}
The protocol for training the U-shaped 1D CNN with a homomorphically EAM consists of four phases: initialization, forward propagation, classification, and backward propagation. The initialization phase only takes place once at the beginning of the procedure, whereas the other phases continue until the model iterates through all epochs.
Each phase is explained in detail below.

\paragraph*{Initialization} This phase consists of socket initialization, context generation, and random weight loading. The client establishes a socket connection to the server and synchronizes the four hyperparameters $\eta ,\ n,\ N, E $ with the server, shown in~\autoref{alg:clientHE} and~\autoref{alg:serverHE}. These parameters must be synchronized on both sides to be trained in the same way.
Also, the weights on the client and server are initialized with the same set of corresponding weights in the local model to accurately assess and compare the influence of SL on performance. On both, client and server, $\boldsymbol{w}^{(i)}$ are initialized using corresponding parts of $\Phi$. $\mathbf{a}^{(i)}$, $\mathbf{z}^{(i)}$, and the gradients are initially set to zero. In this phase, the context generated is an object that holds $\mathsf{pk}$ and $\mathsf{sk}$ of the HE scheme as well as %certain additional parameters like 
$\mathcal{P}$, $\mathcal{C}$ and $\Delta$.
\begin{minipage}{0.5\textwidth}

\begin{algorithm}[H]
\scriptsize
\SetAlgoLined
%\KwResult{Write here the result }
 \textbf{Initialization:}\\
 $s\leftarrow$ socket initialized with port and address\;
 \textit{s.connect}\\
 $\eta, n, N, E \leftarrow s.synchronize()$\\
 $ \{\boldsymbol{w}^{( i)}, \boldsymbol{b}^{( i)}\}_{\forall i\in \{0..l\}} \ \leftarrow \ initialize\ using\ \Phi $\\
 $\{\mathbf{z}^{( i)}\}_{\forall i\in \{0..l\}} ,\{\mathbf{a}^{( i)}\}_{\forall i\in \{0..l\}}\leftarrow \emptyset \ $\\
 $ \left\{\frac{\partial J}{\partial \mathbf{z}^{( i)}}\right\}_{\forall i\in \{0..l\}} ,\left\{\frac{\partial J}{\partial \mathbf{a}^{( i)}}\right\}_{\forall i\in \{0..l\}}\leftarrow \emptyset \ $\\
 \For{$\displaystyle e \ \in \ E $}{
 	\For{$\displaystyle \text{each} \ \text{batch}\ ( \mathbf{x},\ \mathbf{y}) \ \text{from}\ D\ $}{
 	      $\displaystyle  \mathbf{Forward\ propagation:}$
          $\displaystyle \ \ \ \ O.zero\_grad()  $\\
 	$\displaystyle \ \ \ \ \mathbf{a}^{0} \ \ \leftarrow \mathbf{x}$ \\
 	\For{$i \leftarrow 1$ to $l$}{$\displaystyle \ \ \ \ \mathbf{for} \ i\ \leftarrow \ 1\ \mathbf{to} \ l\ \mathbf{do}$\\
 	$\displaystyle \ \ \ \ \ \ \ \ \mathbf{z}^{( i)} \ \leftarrow \ f^{( i)}\left( \mathbf{a}^{( i-1)}\right)$\\
 	$\displaystyle \ \ \ \ \ \ \ \ \mathbf{a}^{( i)} \ \leftarrow \ g^{( i)}\left( \mathbf{z}^{( i)}\right)$\\}
 	$\displaystyle \ \ \ \  s.send\ ( \mathbf{a}^{(l)})$\\
 	$\displaystyle \ \ \ \ s.receive\ ( \mathbf{a}^{(L)})$\\
 	$\displaystyle \ \ \ \  \hat{y} \ \leftarrow \ Softmax\left(\mathbf{a}^{( L)}\right)$\\
 	$\displaystyle \ \ \ \ J \leftarrow \mathcal{L} (\hat{\mathbf{y}}, \mathbf{y})$\\
 	$\displaystyle \mathbf{Backward\ propagation:}$\\
 	$\displaystyle \ \ \ \ \text{Compute}\left\{\frac{\partial J}{\partial \hat{\mathbf{y}}}\ \&\ \frac{\partial J}{\partial \mathbf{a}^{(L)}}\right\}$\\
 	$\displaystyle \ \ \ \ s.send\ \left( \frac{\partial J}{\partial \mathbf{a}^{(L)}} \right)$\\
 	$\displaystyle \ \ \ \ s.receive\ \left( \frac{\partial J}{\partial \mathbf{a}^{( l)}} \right)$\\
 	\For{$i\leftarrow 1$ to $l$}{$\text{Compute}\ \left\{ \frac{\partial J}{\partial \boldsymbol{w}^{( i)}}, \ \frac{\partial J}{\partial \boldsymbol{b}^{( i)}} \right\}$\\
 	$\displaystyle\ \ \ \ \ \ \ \ \text{Update}\ \boldsymbol{w}^{( i)},\ \boldsymbol{b}^{( i)}$
 	}
 	}
 }
 \caption{\textbf{Client Side}}
 \label{alg:client}
\end{algorithm}	
\end{minipage}
\begin{minipage}{0.5\textwidth}
\begin{algorithm}[H]
\scriptsize 
\SetAlgoLined
%\KwResult{Write here the result }
 \textbf{Initialization:}\\
 $s\leftarrow$ socket initialized \;
 \textit{s.connect}\\
 $\eta, n, N, E \leftarrow s.synchronize()$\\
 $ \{\boldsymbol{w}^{( i)}, \boldsymbol{b}^{( i)}\}_{\forall i\in \{0..l\}} \ \leftarrow \ initialize\ using\ \Phi $\\
 $\displaystyle \ \ \ \  \{\mathbf{z}^{( i)}\}_{\forall i\in \{l+1..L\}} \leftarrow \emptyset \ $\\
 $\displaystyle \ \ \ \  \left\{\frac{\partial J}{\partial \mathbf{z}^{( i)}}\right\}_{\forall i\in \{l+1..L\}} \leftarrow \emptyset \ $\\
 \For{$\displaystyle e \ \in \ E $}{
 	\For{$\displaystyle i \leftarrow 1 \ \mathbf{to} \ N \ $}{
 	$\displaystyle \mathbf{Forward\ propagation:}$\\
 	$\displaystyle \ \ \ \ O.zero\_grad()  $\\
 	$\displaystyle \ \ \ \ s.receive\ (\mathbf{a}^{(l)}) \ \ $ \\
$\displaystyle \ \ \ \ \mathbf{a}^{(L)} \ \leftarrow \ f^{( i)}\left( \mathbf{a}^{(l)}\right)$\\
$\displaystyle \ \ \ \ s.send\left( \mathbf{a}^{(L)}\right)$\\
$\displaystyle \mathbf{Backward\ propagation:}$\\
$\displaystyle \ \ \ \ s.receive \ \left( \frac{\partial J}{\partial \mathbf{a}^{(L)}}\right)$\\
$\displaystyle \ \ \ \ \text{Compute}\ \left\{ \frac{\partial J}{\partial \boldsymbol{w}^{(L)}}, \ \frac{\partial J}{\partial \boldsymbol{b}^{(L)}} \right\}$\\
$\displaystyle \ \ \ \ \text{Update}\ \boldsymbol{w}^{( L)},\ \boldsymbol{b}^{(L)}  $\\
$\displaystyle \ \ \ \ \text{Compute}\ \frac{\partial J}{\partial \mathbf{a}^{( l)}} $\\
$\displaystyle \ \ \ \ s.send \left( \frac{\partial J}{\partial \mathbf{a}^{( l)}} \right)$\\
 		}
 }
 \caption{\textbf{Server Side}}
 \label{alg:server}
\end{algorithm}
\end{minipage}

Further information on the HE parameters and how to choose the best-suited parameters can be found in the~\href{https://bit.ly/3KY8ByN}{TenSEAL's benchmarks tutorial}. As shown in~\autoref{alg:clientHE} and~\autoref{alg:serverHE}, the context is either $\mathsf{ctx_{pub}}$ or $\mathsf{ctx_{pri}}$ depending on whether it holds the secret key $\mathsf{sk}$. Both the $\mathsf{ctx_{pub}}$ and $\mathsf{ctx_{pri}}$ have the same parameters, though $\mathsf{ctx_{pri}}$ holds a $\mathsf{sk}$ and $\mathsf{ctx_{pub}}$ does not. The server does not have access to the $\mathsf{sk}$ as the client only shares the $\mathsf{ctx_{pub}}$ with the server. %After completing the initialization phase, both the client and server proceed to the forward and backward propagation phases.

\paragraph*{Forward propagation} In the forward propagation %starts on the client side. 
the client first zeroes out the gradients for the batch of data $(\mathbf{x}, \mathbf{y})$. He then begins calculating the $\mathbf{a}^{(l)}$ AMs from $\mathbf{x}$, as can be seen in~\autoref{alg:clientHE} where each $f^{(i)}$ is a Conv1D layer.

The~\href{https://pytorch.org/docs/stable/generated/torch.nn.Conv1d.html}{Conv1D} layer can be described as following: given a 1D input signal that contains $C$ channels, where each channel $\mathbf{x}_{(i)}$ is a 1D array ($i\in \{1,\ldots,C\}$), a Conv1D layer produces an output that contains $C'$ channels. The $j^{th}$ output channel $\mathbf{y}_{(j)}$, where $j\in \{1,\ldots,C'\}$, can be described as
\begin{equation}\label{eq:1dconvOp}
\scriptsize
	\mathbf{y}_{(j)} = \boldsymbol{b}_{(j)}  + \sum_{i=1}^{C} \boldsymbol{w}_{(i)} \star \mathbf{x}_{(i)},
\end{equation}
where $\boldsymbol{w}_{(i)}, i\in \{1,\ldots,C\}$ are the weights, $\boldsymbol{b}_{(j)}$ are the biases of the Conv1D layer, and $\star$ is the 1D cross-correlation operation. The $\star$ operation can be described as
\begin{equation}
\scriptsize
	\mathbf{z}(i) = (\boldsymbol{w} \star \mathbf{x}) (i) = \sum_{j=0}^{m-1}\boldsymbol{w}(j) \cdot \mathbf{x}(i+j), 
\end{equation}
where $\mathbf{z}(i)$ denotes the $i^{th}$ element of $\mathbf{z}$, and %$i$ starts at 0. Here, the 
size of the 1D weighted kernel is $m$. 

In~\autoref{alg:clientHE}, $g^{(i)}$ can be seen as the combination of Max Pooling and Leaky ReLU functions. The final output AMs of the $l^{th}$ layer from the client is $\mathbf{a}^{(l)}$. The client then homomorphically encrypts $\mathbf{a}^{(l)}$ and sends the EAMs $\overbar{\mathbf{a}^{(l)}}$ to the server. In~\autoref{alg:serverHE}, the server receives $\overbar{\mathbf{a}^{(l)}}$ and then performs forward propagation, which is a linear layer evaluated on HE encrypted data $\overbar{\mathbf{a}^{(l)}}$ as
\begin{equation}
	\label{eq:serverHELinear}
\scriptsize
 \overbar{\mathbf{a}^{(L)}} = \overbar{\mathbf{a}^{(l)}} \boldsymbol{w}^{(L)} + \boldsymbol{b}^{(L)} .
\end{equation}
% After that, the server sends $\overbar{\mathbf{a}^{(L)}}$ to the client~(\autoref{alg:serverHE}). 
Upon reception, the client decrypts $\overbar{\mathbf{a}^{(L)}}$ to get $\mathbf{a}^{(L)}$, performs Softmax on $\mathbf{a}^{(L)}$ to produce the predicted output $\mathbf{\hat{y}}$ and calculate the loss $J$, as can be seen in~\autoref{alg:clientHE}. %Now, we have%
Having finished the forward propagation % and can%
we may move on to the backward propagation part of the protocol.

\paragraph*{Backward propagation} After calculating the loss $J$, the client starts the backward propagation by initially computing %first%
$\frac{\partial J}{\partial \hat{\mathbf{y}}}$ and then $\frac{\partial J}{\partial \mathbf{a}^{(L)}}$ and $ \frac{\partial J}{\partial \boldsymbol{w}^{(L)}}$ using the chain rule~(\autoref{alg:clientHE}). %More concretely%
Specifically, the client calculates:
\begin{align}
\label{equ:chainrule}
\scriptsize
	\frac{\partial J}{\partial \mathbf{a}^{(L)}} &= \frac{\partial J}{\partial \hat{\mathbf{y}}} \frac{\partial \hat{\mathbf{y}}}{\partial \mathbf{a}^{(L)}},~~
	\frac{\partial J}{\partial \boldsymbol{w}^{(L)}} = \frac{\partial J}{\partial \mathbf{a}^{(L)}} \frac{\partial \mathbf{a}^{(L)}}{\partial \boldsymbol{w}^{(L)}}
 \end{align}

%After this%
Following, the client sends $\frac{\partial J}{\partial \mathbf{a}^{(L)}}$ and $ \frac{\partial J}{\partial \boldsymbol{w}^{(L)}}$ to the server. Upon reception, the server computes $\frac{\partial J}{\partial \boldsymbol{b}}$ by simply doing $\frac{\partial J}{\partial \boldsymbol{b}}= \frac{\partial J}{\partial \mathbf{a}^{(L)}}$, based on equation~\eqref{eq:serverHELinear}. The server then updates weights and biases of linear layer according to equation~\eqref{equ:serverUpdateWB}.

\begin{align}
	\label{equ:serverUpdateWB}
 \scriptsize
	\boldsymbol{w}^{(L)} =  \boldsymbol{w}^{(L)} - \eta\frac{\partial J}{\partial \boldsymbol{w}^{(L)}}, \quad & b^{(L)} = \boldsymbol{b}^{(L)} - \eta\frac{\partial J}{\partial \boldsymbol{b}^{(L)}}
\end{align}

\noindent Next, the server calculates
\begin{equation}
\scriptsize
	\frac{\partial J}{\partial \mathbf{a}^{(l)}} = \frac{\partial J}{\partial \mathbf{a}^{(L)}} \frac{\partial \mathbf{a}^{(L)}}{\partial \mathbf{a}^{(l)}},
\end{equation}
and sends $\frac{\partial J}{\partial \mathbf{a}^{(l)}}$ to the client. After receiving $\frac{\partial J}{\partial \mathbf{a}^{(l)}}$, the client calculates the gradients of $J$ w.r.t the weights and biases of the Conv1D layer using the chain-rule, which can generally be described %in general%
as:
\begin{align}
	\label{equ: gradients}
 \scriptsize
	\frac{\partial J}{\partial \boldsymbol{w}^{(i-1)}} &= \frac{\partial J}{\partial \boldsymbol{w}^{(i)}}\frac{\partial \boldsymbol{w}^{(i)}}{\partial \boldsymbol{w}^{(i-1)}}, ~~
	\frac{\partial J}{\partial \boldsymbol{b}^{(i-1)}} = \frac{\partial J}{\partial \boldsymbol{b}^{(i)}}\frac{\partial \boldsymbol{b}^{(i)}}{\partial \boldsymbol{b}^{(i-1)}}    
\end{align}

Finally, after calculating the gradients $\frac{\partial J}{\partial \boldsymbol{w}^{(i)}}, \ \frac{\partial J}{\partial \boldsymbol{b}^{(i)}}$, the client updates $\boldsymbol{w}^{(i)}$ and $\boldsymbol{b}^{(i)}$ using the Adam optimization algorithm~\cite{kingma2015adam}.

\section{Formal Protocol Construction}	
\label{sec:protocol}

In this section, we formalize the communication between the client and the server. 
%To do so%
To this end, we design a protocol that is divided in two phases (Setup and Running) and relies on the following five building blocks:

\begin{minipage}{0.5\textwidth}
\begin{algorithm}[H]
\scriptsize 
\SetAlgoLined
%\KwResult{Write here the result }
 \textbf{Context Initialization:}\\
 $\displaystyle \ \ \ \ \ \mathsf{ctx_{pri}},\ \leftarrow \ \mathcal{P}, \ \mathcal{C}, \ \Delta, \  \mathsf{pk}, \ \mathsf{sk}$\\
 $\displaystyle \ \ \ \ \ \mathsf{ctx_{pub}},\ \leftarrow \ \mathcal{P}, \ \mathcal{C}, \ \Delta, \  \mathsf{pk}$\\
			 $\displaystyle \ \ \ \ \ s.send( \mathsf{ctx_{pub}})$\\
	\For{$\displaystyle e \ \text{in} \ E $}{
	\For{$\displaystyle \text{ each} \ \text{batch}\ ( \mathbf{x},\ \mathbf{y}) \ \text{from}\ \mathbf{D}\ $}{
	$\displaystyle  \mathbf{Forward\ propagation:}$\\
	$\displaystyle \ \ \ \ O.zero\_grad()  $\\
	$\displaystyle \ \ \ \ \mathbf{a}^{0}\ \ \leftarrow \mathbf{x}$ \\
	\For{$i\ \leftarrow \ 1\ \mathbf{to} \ l$}{
	$\displaystyle \ \ \ \ \ \ \ \ \mathbf{z}^{( i)} \ \leftarrow \ f^{( i)}\left( \mathbf{a}^{( i-1)}\right)$\\
	$\displaystyle \ \ \ \ \ \ \ \ \mathbf{a}^{i} \ \leftarrow \ g^{( i)}\left( \mathbf{z}^{( i)}\right)$}
	$\displaystyle \ \ \ \ \overbar{\mathbf{a}^{(l)}} \ \leftarrow \ \mathsf{HE.Enc}\left(\mathsf{pk}, \mathbf{a}^{(l)}\right)$\\
	$\displaystyle \ \ \ \ s.send \ \overbar{(\mathbf{a}^{(l)})}$\\
	$\displaystyle \ \ \ \  s.receive\ ( \overbar{\mathbf{a}^{(L)})}$\\
	$\displaystyle  \ \ \ \  \mathbf{a}^{( L)} \ \leftarrow \ \mathsf{HE.Dec}\left(\mathsf{sk}, \overbar{\mathbf{a}^{( L)}}\right)$\\
	$\displaystyle  \ \ \ \  \hat{\mathbf{y}} \ \leftarrow \ Softmax\left(\mathbf{a}^{( L)}\right)$\\
	$\displaystyle \ \ \ \  \mathbf{J} \leftarrow \mathcal{L} (\hat{\mathbf{y}}, \mathbf{y})$\\
	$\displaystyle \mathbf{Backward\ propagation:}$\\
	$\displaystyle \ \ \ \ \text{Compute}\left\{\frac{\partial J}{\partial \hat{\mathbf{y}}} \& \frac{\partial J}{\partial \mathbf{a}^{(L)} } \& \frac{\partial J}{\partial \boldsymbol{w}^{(L)}} \right\}$\\
	$\displaystyle \ \ \ \ s.send\left(\frac{\partial J}{\partial \mathbf{a}^{(L)} } \& \frac{\partial J}{\partial \boldsymbol{w}^{(L)}}\right)$\\
	$\displaystyle \ \ \ \ s.receive\left( \frac{\partial J}{\partial \mathbf{a}^{(l)}} \right)$\\
	\For{$i\leftarrow l\ \text{down to} \ 1$}{
	$\displaystyle \ \ \ \ \ \ \ \ \text{Compute} \left\{ \frac{\partial J}{\partial \boldsymbol{w}^{(i)}}, \ \frac{\partial J}{\partial \boldsymbol{b}^{(i)}} \right\}$\\
	$\displaystyle \ \ \ \ \ \ \ \ \text{Update}\ \boldsymbol{w}^{( i)},\ \boldsymbol{b}^{(i)} $}
	}	
	}
 \caption{\textbf{Client Side}}
 \label{alg:clientHE}
\end{algorithm}
\end{minipage}
\begin{minipage}{0.5\textwidth}
\begin{algorithm}[H]
\scriptsize 
\SetAlgoLined
%\KwResult{Write here the result }
 \textbf{Context Initialization:}\\
 $\displaystyle \ \ \ \ \ s.receive(\mathsf{ctx_{pub}})$\\
 \For{$\displaystyle \mathbf{ e} \ \text{in} \ \mathbf{E} $}{
 \For{$\displaystyle i \leftarrow 1 \ \mathbf{to} \ N \ $ }{
 $\displaystyle \mathbf{Forward\ propagation:}$\\
 $\displaystyle \ \ \ \ O.zero\_grad()  $\\
 $\displaystyle \ \ \ \ s.receive\ \overbar{(\mathbf{a}^{(l)})}$\\
 $\displaystyle \ \ \ \ \overbar{\mathbf{a}^{(L)}} \ \leftarrow \ \mathsf{HE.Eval} \left(f^{( i)}\left( \overbar{\mathbf{a}^{( l)}}\right)\right)$\\
 $\displaystyle \ \ \ \ s.send\left( \overbar{\mathbf{a}^{( L)}}\right)$\\
 $\displaystyle \mathbf{Backward\ propagation:}$\\
 $\displaystyle \ \ \ \ s.receive\left\{\frac{\partial J}{\partial \mathbf{a}^{(L)}} \& \frac{\partial J}{\partial \boldsymbol{w}^{(L)}}\right\}$\\
 $\displaystyle \ \ \ \ \text{Compute}\ \frac{\partial J}{\partial \boldsymbol{b}^{(L)}} $\\
 $\displaystyle \ \ \ \  \text{Update}\ \boldsymbol{w}^{(L)},\ \boldsymbol{b}^{(L)} $\\
 $\displaystyle \ \ \ \ \text{Compute}\ \frac{\partial J}{\partial \mathbf{a}^{(l)}} $\\
 $\displaystyle \ \ \ \ s.send \left( \frac{\partial J}{\partial \mathbf{a}^{(l)}} \right)$\\
 }
 }
 \caption{\textbf{Server Side}}
 \label{alg:serverHE}
\end{algorithm}
\end{minipage}

\begin{itemize}
	\item A CCA2 secure public-key encryption scheme ${\mathsf{PKE} = \left(\mathsf{Gen, Enc, Dec}\right)}$;
	\item An EUF-CMA secure signature scheme ${\mathsf{Sign} = (\mathsf{\sigma, ver})}$;
	\item A Leveled Homomorphic Encryption scheme ${\mathsf{HE} = (\mathsf{KeyGen, Enc, Eval, Dec})}$;
	\item A first and second pre-image resistant hash function $H$;
	\item A synchronized clock between the Client and the Server.
\end{itemize}

%Our protocol is divided into two phases; A \textbf{Setup} phase and a \textbf{Running Phase}:

\paragraph*{Setup Phase} During this phase each entity generates a public/private key pair $\mathsf{(pk, sk)}$ for the CCA2-secure public-key encryption scheme $\mathsf{PKE}$ and a sign/verification key pair $\mathsf{(\sigma, ver)}$ for the EUF-CMA-secure signature scheme $\mathsf{Sign}$. Furthermore, the client runs $\mathsf{HE.KeyGen}$ to generate the public, private and evaluation key of the HE scheme. Below we provide a list of all generated keys:
\begin{itemize}
	\item $\mathsf{(pk_C, sk_C)}$ - public/private key pair for the Client;
	\item $\mathsf{(\sigma_{C}, ver_{C})}$ - sign/verification key pair for the Client;
	\item $\mathsf{(pk_S, sk_S)}$ - public/private key pair for the Server;
	\item $\mathsf{(\sigma_{S}, ver_{S})}$ - sign/verification key pair for the Server;
	\item $\mathsf{(pk_{HE}, sk_{HE}, evk_{HE})}$ - public, private and evaluation keys generated by Client.
\end{itemize}

\paragraph*{Running Phase} After %the successful execution of%
successfully executing the \textbf{Setup} phase, the Client initiates the protocol's running phase % of the protocol%
by sending 

$m_1 = \langle t_1, \mathsf{PKE.Enc}(\mathsf{pk_{S}}, \mathsf{evk_{HE}}), \mathsf{HE.Enc}(\mathsf{pk_{HE}}, \mathrm{AMap}), \sigma_{C}(H_1) \rangle$ 
to the server, where $t_1$ is a timestamp, $H_1$ is a hash such that: 

$H_1 = H(\mathsf{evk_{HE}}\|\mathsf{HE.Enc}(\mathsf{pk_{HE}}, \mathrm{AMap}))$, $\mathrm{AMap}$ is the AM and $\sigma_C(\cdot)$ denotes the cryptographic signature of $C$. Upon reception, the Server verifies the freshness of the message by looking at the timestmap $t_1$ and the signature of the sender. If any of the verifications fail, the Server outputs $\perp$ and aborts the protocol. Otherwise, it first decrypts the evaluation key $\mathsf{evk}_{HE}$ using its private key $\mathsf{sk_S}$ and subsequently uses the homomorphic evaluation key $\mathsf{evk_{HE}}$ to operate on $\mathrm{Amap}$. The result of these operations is a homomorphically encrypted output $\mathsf{HE.Enc(pk_{HE}, out)}$, %that is being%
which is sent back to the Client via: $m_2 = \langle t_2, \mathsf{HE.Enc(pk_{HE}, out)}, \sigma_{S}(H_2) \rangle$, where $H(2) = H(t_2 \| \mathsf{HE.Enc(pk_{HE}, out)})$. Upon reception, the Client first verifies the freshness of the message and the server's signature. %of the Server%
Should the verification fail, the Client outputs $\perp$ and aborts the protocol. Otherwise, the Client first recovers the encrypted output by running $\mathsf{HE.Dec}(\mathsf{sk}_{\mathsf{HE}}, \mathsf{HE.Enc(pk_{\mathsf{HE}}, out)}) \rightarrow \mathsf{out}$. This output will be used by the Server to compute the initial gradients $\mathsf{grad}$ for the ML model. Having computed $\mathsf{grad}$ the Client forwards them to the Server via: $m_3 = \langle t_3, \mathsf{PKE.Enc}(\mathsf{pk}_{C}, \mathsf{grad}), \sigma_{S}(H_3)\rangle$, where $H_3 = H(t_3\|\mathsf{grad})$. Upon receiving $m_3$, the Server first verifies the freshness and the signature of the message. Should the verification fail, the Server outputs $\perp$ and aborts the protocol. Otherwise, it recovers the gradients by running $\mathsf{PKE.Dec(sk_{S}, PKE.Enc(pk_{S}, grad)})\rightarrow \mathsf{grad}$. Based on $\mathsf{grad}$, the Server can update the parameters of the ML model (i.e. bias and weights), a process %which%
resulting to %the%
updated gradients $\mathsf{grad'}$. Finally, the Server outsources $\mathsf{grad}'$ to the Client via: $m_4 = \langle t_4, \mathsf{PKE.Enc(pk_{C}, grad'}), \sigma_{C}(H_4)\rangle$, where $H_4 = H(t_4\|\mathsf{grad}'))$. Upon reception, Server verifies freshness and message signature. Should the verification fail, Client outputs $\perp$ and aborts the protocol. Otherwise it decrypts the updated gradients by running $\mathsf{PKE.Dec(sk_{C}, \mathsf{PKE.Enc(pk_{C}, grad'}))} \rightarrow \mathsf{grad'}$. The running phase of our protocol is illustrated in \autoref{fig:protocol}.
%
% \begin{figure}
%   \centering
%    \scalebox{.3}{
%   \begin{sequencediagram}
%     \newthread{A}{Client}{}
%     \newinst[9]{B}{Server}{}
%     \begin{messcall}{A}{$m_1 = \langle t_1, \mathsf{PKE.Enc}(\mathsf{pk_{S}}, \mathsf{evk_{HE}}), \mathsf{HE.Enc}(\mathsf{pk_{HE}}, \mathrm{AMap}), \sigma_{C}(H_1) \rangle$}{B}{}
    
%     \begin{callself}{B}{$\mathsf{HE.Eval}(\mathsf{pk}_{\mathsf{HE}}, \mathrm{AMap})$}{$\mathsf{HE.Enc(pk_{HE}, out)}$}
%     \end{callself}
%     \begin{messcall}{B}{$m_2 = \langle t_2, \mathsf{HE.Enc(pk_{HE}, out)}, \sigma_{S}(H_2) \rangle$}{A}{}
   
%    \begin{callself}{A}{Use $\mathsf{out}$ to compute gradients}{$\mathsf{grad}$}
%    \end{callself}
%    \begin{messcall}{A}{$m_3 = \langle t_3, \mathsf{PKE.Enc}(\mathsf{pk}_{C}, \mathsf{grad}), \sigma_{S}(H_3)\rangle$}{B}{}
%    \begin{callself}{B}{Update parameters}{$\mathsf{grad'}$}
%    \end{callself}
%    \begin{messcall}{B}{$m_4 = \langle t_4, \mathsf{PKE.Enc(pk_{C}, grad'}), \sigma_{C}(H_4)\rangle$}{A}{}
%    \end{messcall}
%    \end{messcall}
%    \end{messcall} 
%    \end{messcall}
%   \end{sequencediagram}}
%   \caption{Running Phase}
%   \label{fig:protocol}
% \end{figure}

\section{Protocol Security}
\label{app:secanalysis}
We prove the security of our protocol in presence of a probabilistic polynomial time (PPT) adversary $\mathcal{ADV}$. We assume that $\mathcal{ADV}$ has the following capabilities:

\begin{itemize}
	\item $\mathcal{ADV}$ overhears the communication between the Client and the Server;
	\item $\mathcal{ADV}$ is allowed to tamper with any message she sees, either by changing the contents of the message, or by replacing it with another one.
\end{itemize} 

In our threat model, we assume that $\mathcal{ADV}$ does not block the communication between the Client and the Server.

More formally, we will prove the following proposition:

\begin{proposition}[Protocol Soundness]
Let $\mathsf{PKE}$ be a CCA2-secure public-key encryption scheme and $\mathsf{Sign}$ an EUF-CMA-secure signature scheme. Moreover, let $\mathcal{ADV}$ be a PPT adversary. Then $\mathcal{ADV}$:
\begin{enumerate}
	\item Can \textbf{not} infer any information from the exchanged messages except from the time of sending;
	\item Can \textbf{not} tamper with the content of any message in a way that goes unnoticed.
\end{enumerate}
\end{proposition}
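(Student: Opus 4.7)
The plan is to prove the two claims separately via reductions to the stated assumptions. For the first claim (confidentiality), I would proceed by a hybrid argument across the four transmitted messages $m_1, m_2, m_3, m_4$. In each $m_i$, the only components that can carry information about the Client's data, the activation maps, the gradients, or the updated model parameters are the ciphertexts produced by $\mathsf{PKE.Enc}$ and $\mathsf{HE.Enc}$; the remaining fields are either timestamps (which by the statement of the proposition are allowed to leak) or signatures, which under EUF-CMA only commit to a hash and therefore reveal no more than a hash of the ciphertext. For each message I would construct a simulator that replaces the real ciphertexts by encryptions of a fixed dummy plaintext, and argue indistinguishability by reducing successively to the CCA2 security of $\mathsf{PKE}$ (for the wrapped $\mathsf{evk_{HE}}$, for $\mathsf{grad}$, and for $\mathsf{grad}'$) and to the IND-CPA security of the leveled $\mathsf{HE}$ scheme (for $\mathrm{AMap}$ and $\mathsf{out}$). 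If $\mathcal{ADV}$ could extract any non-timestamp information, at least one hop in the hybrid would give a PPT distinguisher contradicting the corresponding assumption.

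For the second claim (integrity), I would argue by contradiction. Suppose $\mathcal{ADV}$ produces a tampered message $m_i^{\ast}$ that the recipient accepts. Since the recipient verifies the freshness of $t_i$ against its synchronized clock and the signature $\sigma(H_i)$ on the hash of the contents, acceptance means $\mathcal{ADV}$ has produced a valid signature on a hash $H_i^{\ast}$ of the modified contents. I then split into two cases. If $H_i^{\ast}$ equals a hash of a previously signed message, then $\mathcal{ADV}$ has found a second pre-image for $H$, contradicting the second pre-image resistance assumed on $H$. Otherwise, $H_i^{\ast}$ is fresh, and the pair consisting of $H_i^{\ast}$ and its signature is an existential forgery for $\mathsf{Sign}$, contradicting EUF-CMA security. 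Pure replay of an earlier valid message is defeated by the synchronized-clock freshness check on $t_i$, so relaying old traffic yields no advantage.

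The step I expect to be the main obstacle is the confidentiality reduction for $m_1$, because this message reveals the homomorphic evaluation key $\mathsf{evk_{HE}}$ to the Server after the outer CCA2 layer is removed, and in the reduction the simulator must provide $\mathsf{evk_{HE}}$ to the adversary while still using the IND-CPA challenger for $\mathsf{HE}$. This goes through under the standard convention that leveled $\mathsf{HE}$ schemes remain IND-CPA-secure given publication of $\mathsf{evk}$; I would record this as an explicit assumption in the proof. A secondary subtlety is that the Client later transmits $\mathsf{grad}$, which is derived from the decryption $\mathsf{out}$ of the server's response, so the reductions for $m_2$ and $m_3$ cannot be treated as independent games. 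I would handle this by chaining the hops in the natural temporal order $m_1 \to m_2 \to m_3 \to m_4$, reusing the same challenge bit across hops so that dummy values are propagated consistently, and then invoke a standard union bound to conclude the overall advantage is negligible.
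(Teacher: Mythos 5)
Your proposal is correct and follows the same high-level decomposition as the paper -- confidentiality is reduced to the security of the two encryption schemes, and integrity to EUF-CMA security of the signature scheme combined with the timestamp freshness check -- but your version is substantially more rigorous than what the paper actually writes. The paper's proof of the first claim is a two-line assertion that decrypting either ciphertext happens with probability $negl(\lambda)+negl(\kappa)$; it never constructs the hybrid games, never addresses the fact that $\mathsf{evk_{HE}}$ is handed to the Server (your explicit assumption that the leveled HE scheme stays IND-CPA given the evaluation key is genuinely needed and nowhere stated in the paper), and never confronts the dependence of $m_3$ on the decryption of $m_2$, which you correctly resolve by chaining the hops in temporal order. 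For the second claim, the paper's case split (forge a fresh message vs.\ replay an old one) matches yours, but the paper attributes both cases purely to signature forgery; your additional second-pre-image case is actually necessary, since the signature binds only to $H(\cdot)$ of the contents, so an adversary who finds a colliding pre-image defeats the scheme without forging anything -- the paper lists first- and second-pre-image resistance of $H$ among its building blocks but never invokes it in the proof. In short, you and the paper take the same route; yours is the version that would survive review, at the cost of carrying the extra bookkeeping of the simulator construction and the union bound over the four messages.
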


\begin{proof}
We examine each assumptions separately, and %for both of them%
we will prove that they both hold with overwhelming probability.

\begin{enumerate}[label=\bfseries A\arabic*:]
	\item Our first assumption is that $\mathcal{ADV}$ can not infer any information from exchanged messages. Assuming that $\mathcal{ADV}$ does not collude with neither Client nor Server, the only way to infer information about messages is to successfully decrypt messages encrypted under $\mathsf{PKE}$ or $\mathsf{HE}$. However, assuming security of both of those schemes, this can only happen with negligible probability in the security parameters $\lambda$ and $\kappa$ of $\mathsf{PKE}$ and $\mathsf{HE}$ respectively. Hence, if we denote advantage of $\mathcal{ADV}$ in decrypting exchanged ciphertexts by $\epsilon$, we get: $ \epsilon = negl(\lambda) + negl(\kappa)$.
So finite sum of negligible functions is still negligible, $\mathcal{ADV}$ can decrypt messages with negligible probability and, our assumption holds with overwhelming probability.
	\item $\mathcal{ADV}$ can try tampering with the exchanged messages in two possible ways:
		\begin{itemize}
			\item Generate and send her own messages in place of the actual messages;
			\item Replay old messages.
		\end{itemize}
Generating her own valid ciphertexts is trivial as every ciphertext is encrypted under a public key. Moreover, $\mathcal{ADV}$ would also need to forge a valid signature of the sender should she wish to create a malicious message that is indistinguishable from a real one. However, given the EUF-CMA security of the signature scheme $\mathsf{Sign}$, this can only happen with negligible probability in the security parameter $\mu$ of $\mathsf{Sign}$. 

Thus, the only alternative for $\mathcal{ADV}$ is to replay and old message that was transmitted at some time in the past $t'$. However, since, each exchange message contains the current timestamp both in the first component of the message and in the signed hash, $\mathcal{ADV}$ would once again need to forge a valid signature of the sender since, otherwise, the verification of the signature would pass, though the verification of the timestamp would fail. Hence, if we denote by $\epsilon_2$ the advantage of $\mathcal{ADV}$ in forging a valid signature, we conclude that the overall advantage of $\mathcal{ADV}$ in tampering with the content of any message in an indistinguishable way is:

\begin{equation}
\scriptsize
	\epsilon_2 = negl(\mu) + negl(\mu) = negl(\mu)
\end{equation}

Hence, our second assumption holds with overwhelming probability.

\end{enumerate}
\end{proof}

\section{Performance Analysis}
\label{sec:performance}

% \subsection{ECG Datasets} 
% \label{sebsec:ecgdataset}
In this work, we evaluate our method on two ECG datasets: the MIT-BIH dataset~\cite{moody2001impact} and the PTB-XL dataset~\cite{wagner2020ptb}.
% \footnote{Detailed information in the appendix~\autoref{sec: datasets} due to limited space}.
\footnote{Due to limited space, the figures from this section are moved to~\autoref{sec: datasets}.}

% \begin{figure}
% 	\centering
% 	\includegraphics[width=0.3\textwidth]{}
% 	\caption{A 12-lead normal heartbeat from the PTB-XL dataset.}
% 	\label{fig:NORM-PTBXL}
% \end{figure}

\subsection{Experimental Setup}
\label{subsec:setup}
All models are trained on a machine with Ubuntu 20.04 LTS, processor Intel Core i7-8700 CPU at 3.20GHz, 32Gb RAM, GPU GeForce GTX 1070 Ti with 8Gb of memory. We write our program in the~\href{https://www.python.org/downloads/release/python-397/}{Python version 3.9.7}. The NNs are constructed using the~\href{https://pytorch.org/get-started/previous-versions/}{PyTorch library version 1.8.1+cu102}. For HE algorithms, we employ the~\href{https://github.com/OpenMined/TenSEAL}{TenSeal library version 0.3.10}. %The code used to carry out the experiments can be found~\href{https://anonymous.4open.science/r/split-learning-1D-HE-4BB0/README.md}{here}.

%We perform our experiments in the localhost setting. %with a bandwidth of 1.39 Gbps%.

In terms of hyperparameters, we train all networks with 10 epochs, a $\eta=0.001$ learning rate, and a $n=4$ training \textit{n}. 
%that%
For split NN with HE AMs, we use the Adam optimizer for client model and mini-batch Gradient Descent for server model. We use GPU for %the%
networks %that are%
trained on plaintext. For U-shaped SL on HE AMs, we train the client model on GPU, and server model on CPU.

\subsection{Evaluation}
\label{subsec:experiments}
In this section, we report the experimental results in terms of accuracy, training duration and communication throughput. We measure the accuracy of the three NN on the plaintext test set after the training processes are completed.% (see~\autoref{sec:mapsizes} for Networks with Different Activation Map Sizes and~\autoref{sec:trainlocally} for locally training the network).

\paragraph{\textbf{Networks with Different Activation Map Sizes:}}
The 1D CNN models used on both MIT-BIH and PTB-XL datasets have two Conv1D layers and one linear layer. The AMs are the output of the last Conv1D layer. 

For the MIT-BIH dataset, we experiment with two sizes of AMs: %, which are%
$[\text{\textit{n}}, 512]$ (as in~\cite{abuadbba2020can}) and $[\text{\textit{n}}, 256]$~\cite{khan2023split}. We get the AMs of size $[\textit{n}, 256]$ by reducing the number of output channels in the second Conv1D layer by half (from 16 output channels to 8 output channels). We denote the 1D CNN model with an AM sized $[\textit{n}, 256]$ as $M_1$, and the model with an AM sized $[\textit{n}, 512]$ as $M_2$.

For the PTB-XL dataset, we change the number of the input channels for the first Conv1D layer to 12, since the input data are 12-lead ECG signals~\cite{khan2023love}. Besides, we only experiment with 8 output channels for the second Conv1D layer. We denote this network by $M_3$. Using $M_3$, the output AM size is $[\textit{n}, 2000]$.

\paragraph{\textbf{Training Locally:}}
The result when training the model $M_1$ locally on the MIT-BIH plaintext dataset is shown in \autoref{fig:localTrain256}. The NN %begins to%
learns quickly and is able to decrease the loss drastically from epoch 1 to 5. After that, from epoch 6-10, the loss begins to plateau. After training for 10 epochs, we test the trained NN on the test dataset and get 88.06\% accuracy. Training the model locally on plaintext takes 4.8 seconds for each epoch on average.

% \begin{figure}[!hb]
% 	\centering
% 	\includegraphics[width=0.3\textwidth]{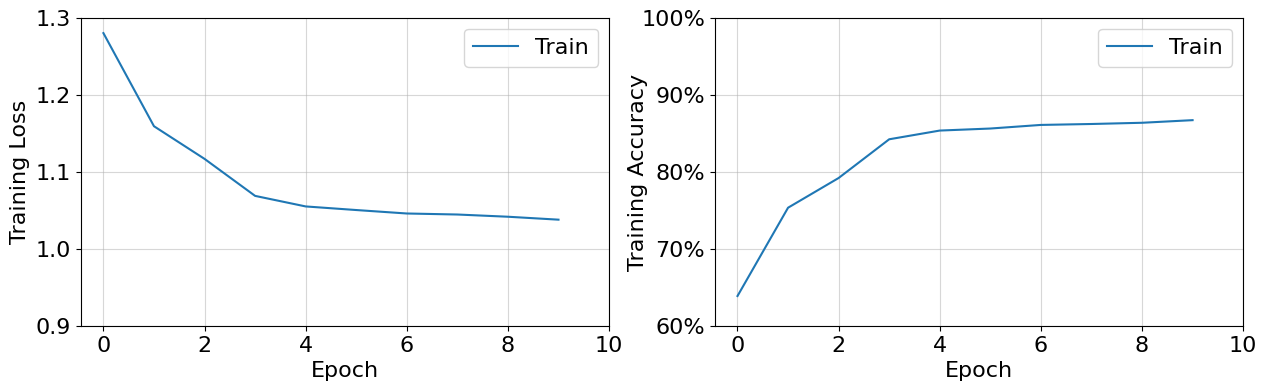}
% 	\caption{Results when training locally on the plaintext MIT-BIH dataset with activation maps of size $[\text{batch size}, 256]$.}
% 	\label{fig:localTrain256}
% \end{figure}

\autoref{fig:localTrain512} shows the results when training the model $M_2$ on plaintext MIT-BIH. After 10 epochs, the model achieves the best training accuracy of 91.66\%. %When running on the test set, it %gets%
the trained model results in 92.84\% prediction accuracy. Each training epoch takes 4.8 seconds on average%. This is%
-the same as model $M_1$. Even though the two models differ in the output sizes of the AMs, they are relatively small models: the number of parameters is 2061 for $M_1$, and 3989 for $M_2$. As both models can %be%
fit %totally%
in the GPU memory, %it makes%
the local training duration becomes similar for both models.
% \begin{figure}
% 	\centering
% 	\includegraphics[width=0.3\textwidth]{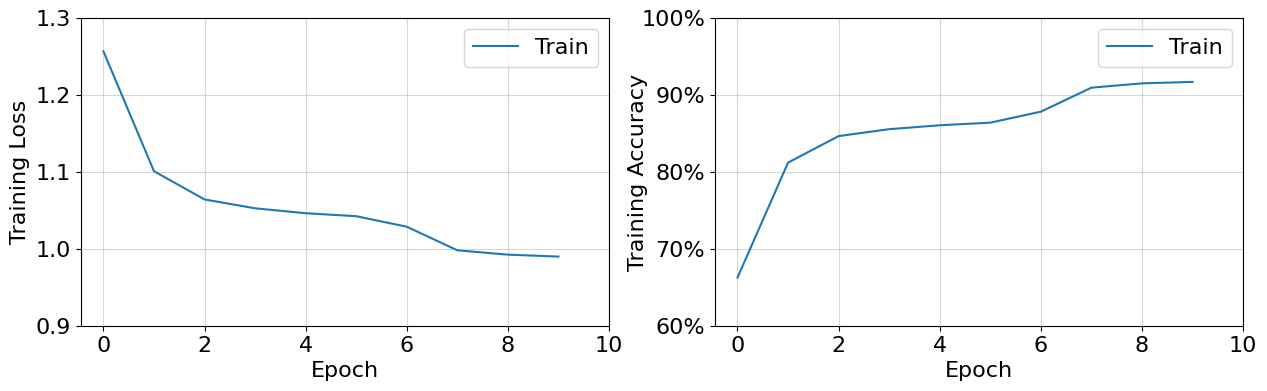}
% 	\caption{Training locally on MIT-BIH dataset $[\text{batch size}, 512]$.}
% 	\label{fig:localTrain512}

% \end{figure}

Training $M_3$ locally on the plaintext PTB-XL dataset results in \autoref{fig:localTrainPTBXL} %We can see that% 
achieves a training accuracy of 72.65\% after 10 epochs with a test accuracy is 67.68\%. This low accuracy is due to small NN %(contains only 12013 trainable parameters), 
 with~12013 trainable parameters and limited training epochs (each takes~10.56 seconds on average).

% \begin{figure}
% 	\centering
% 	\includegraphics[width=0.3\textwidth]{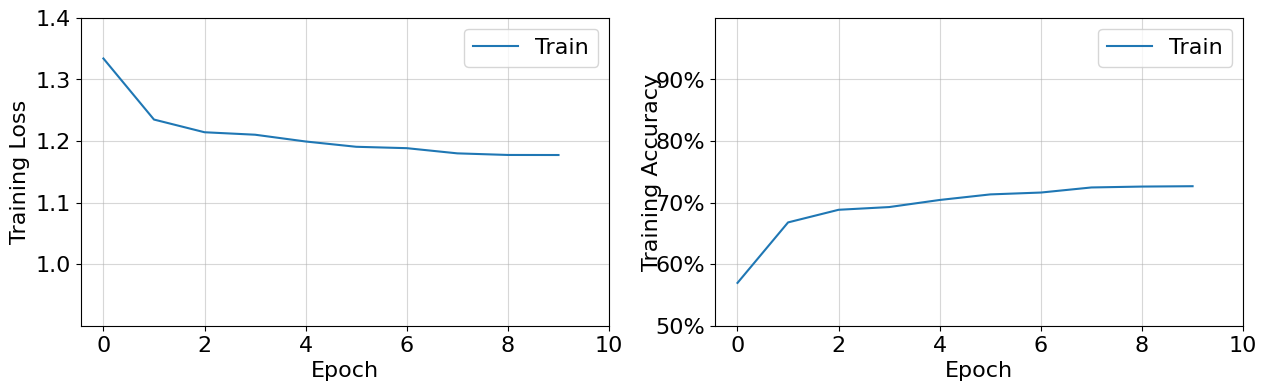}
% 	\caption{Training locally on the plaintext PTB-XL dataset.}
% 	\label{fig:localTrainPTBXL}
% \end{figure}

\subsection{U-shaped Split Learning using Plaintext Activation Maps}
\label{subsec:plaintextactivationmap}
%From% 
Our experiments, %we found%
show that training the U-shaped split model on plaintext (reported in section~\ref{subsec:u_shaped_split}) %gives us%
produces the same results in terms of accuracy compared to local training %locally%
for both models $M_1$, $M_2$ and $M_3$. This result is similar to the findings of %what has been found in%
~\cite{abuadbba2020can}. Even though the authors of~\cite{abuadbba2020can} only used the vanilla split model, they also found that compared to training locally, accuracy was not reduced.

We will now discuss the training time and communication overhead of the U-shaped split models and compare them to their local versions. For the split version of $M_1$, each training epoch takes 8.56 seconds on average, %which is%
hence 43.9\% longer than local training. Training the split version of $M_2$ takes 8.67 seconds per epoch on average, which is 44.6\% longer compared to the 4.8 seconds of local training. %when training locally.% 
The split version of $M_3$ on the PTB-XL dataset takes 15.55 seconds per epoch to train and is 47.25\% slower than the local version. 
The U-shaped split models take longer to train due to the communication between the client and the server. The communication cost for one epoch of training split $M_1$ and $M_2$ are 33.06 Mb and 60.12 Mb, respectively. $M_2$ incurs almost twice as much communication overhead compared to $M_1$ due to the bigger size of the AMs. For $M_3$, communication overhead is on average 316.9 Mb per epoch, which is much bigger than $M_1$ and $M_2$ due to bigger AMs sent from client during training.

These figures show %, we can see%
that the similarities are not as strong compared to the MIT-BIH AMs. This is due to the fact that $M_3$ is a small network trained on a %small%
limited number of epochs (72.65\% training accuracy after 10 epochs), therefore, the convolution layers are not yet able to produce highly similar AMs compared to the original signals. Still the figures indicate %from the figures
%we can still see some%
similar patterns between the AMs and the original PTB-XL signals. Nonetheless, to reach any conclusions, we first need to experiment with better NNs that can produce highly accurate predictions on both the train and test splits of the PTB-XL dataset, then quantify similarities between AMs and input signals of these networks. %The above %This %remains%
%will be the focus of our future work.
\subsection{Visual Invertibility}
\label{subsec:VisualInvert}
In SL, certain AMs sent from client to the server to continue the training process show high similarity with the client's input data, as demonstrated in \autoref{fig:visual_invertibility} for models trained on the MIT-BIH dataset. The figure indicates that, compared to the raw input data from the client (the first row of~\autoref{fig:visual_invertibility}), some AMs (as plotted in the second and third row of~\autoref{fig:visual_invertibility}) have exceedingly similar patterns. This phenomenon clearly compromises the privacy of the client's raw data. The authors of~\cite{abuadbba2020can} quantify the privacy leakage by measuring the correlations between the AMs and the raw input signal by using two metrics: distance correlation and Dynamic Time Warping. This approach allows them to measure whether their solutions mitigate privacy leakage work. Since our work uses HE,
said metrics are unnecessary as the AMs are encrypted. Similar to MIT-BIH, we visualize the output AMs produced by $M_3$ to access their visual similarity compared to original signals. Due to space constraints, we visualize results only for normal class (see \autoref{fig:visual_invertibility_norm}) instead of five different classes of heartbeat in PTB-XL dataset.

\subsection{U-shaped Split 1D CNN with Encrypted Activation Maps}
\label{subsec:UShapedHE}
We train the split NNs $M_1$ and $M_2$ on the MIT-BIH dataset using EAMs according to~\ref{subsubsection: encryptedactivation}. To encrypt the AMs from the client before sending them to the server, we experiment with five different sets of HE parameters for both models $M_1$ and $M_2$.
Furthermore, we also employ the batch encryption (BE) feature of the CKKS encryption scheme. 
%It is a powerful approach permiting us to encode numerous plaintext values in a single ciphertext and is applicable to a subset of HE schemes. Each plaintext will be affected, if an operation is performed with a batched ciphertext. This method can significantly enhance throughput for a variety of applications, including neural network training and inference~\cite{clet2021bfv}. 
BE allows us to encrypt a $N \times N$ matrix into $N$ ciphertexts, with each column encrypted as a ciphertext for memory and computation optimization~\cite{benaissa2021tenseal}. We experiment with training the NNs with and without BE. Additionally, we perform experiments using different combinations of HE parameters. \autoref{tab:trainingTestingResultsMITBIH} shows the results in terms of training time, testing accuracy, and communication overhead for the NNs with different configurations on the MIT-BIH dataset. %are shown in%
For the U-shaped SL version on the plaintext, we captured all communication between client and server. For training split models on EAMs, we approximate the communication overhead for one training epoch by getting the average communication of training on first ten batches of data, multiply that with total number of training batches.

Results differ between training $M_1$ and $M_2$ with different sets of HE parameters. For the $M_1$ model, the best test accuracy was 85.41\%, when using the set of HE parameters with $\mathcal{P}=4096$, $\mathcal{C}=[40, 20, 20]$, $\Delta=2^{21}$ (denoted $s_1$), and without BE. The accuracy drop was $2.65\%$ compared to plaintext training.

\begin{table*}[!h]
	\centering
	\caption{Training and testing results on the MIT-BIH dataset}
	\resizebox{\textwidth}{!}{%
		\begin{tabular}{c|c|cccc|c|c|c}
			\hline
			\multirow{2}{*}{Network} & \multirow{2}{*}{Type of Network} & \multicolumn{4}{c|}{HE Parameters}                                                                                      & \multirow{2}{*}{Training duration per epoch (s)} & \multirow{2}{*}{Test accuracy (\%)} & \multirow{2}{*}{Communication per epoch (Tb)} \\ \cline{3-6}
			&                                  & \multicolumn{1}{c|}{BE}                     & \multicolumn{1}{c|}{$\mathcal{P}$}    & \multicolumn{1}{c|}{$\mathcal{C}$}             & $\Delta$        &                                    &                                     &                                     \\ \hline
			\multirow{12}{*}{$M_1$}     & Local                            & \multicolumn{4}{c|}{}                                                                                                   & 4.80                                   & 88.06                               & 0                                   \\ \cline{2-9} 
			& Split (plaintext)                & \multicolumn{4}{c|}{}                                                                                                   & 8.56                                   & 88.06                               & 33.06e-6                            \\ \cline{2-9} 
			& \multirow{10}{*}{Split (HE)}     & \multicolumn{1}{c|}{\multirow{5}{*}{False}} & \multicolumn{1}{c|}{8192} & \multicolumn{1}{c|}{[60,40,40,60]} & $2^{40}$ & 50 318                                 & 85.31                               & 37.84                               \\ \cline{4-9} 
			&                                  & \multicolumn{1}{c|}{}                       & \multicolumn{1}{c|}{8192} & \multicolumn{1}{c|}{[40,21,21,40]} & $2^{21}$ & 48 946                                 & 80.63                               & 22.42                               \\ \cline{4-9} 
			&                                  & \multicolumn{1}{c|}{}                       & \multicolumn{1}{c|}{4096} & \multicolumn{1}{c|}{[40,20,20]}    & $2^{21}$ & 14 946                                 & 85.41                               & 4.49                                \\ \cline{4-9} 
			&                                  & \multicolumn{1}{c|}{}                       & \multicolumn{1}{c|}{4096} & \multicolumn{1}{c|}{[40,20,40]}    & $2^{20}$ & 18 129                                 & 80.78                               & 4.57                                \\ \cline{4-9} 
			&                                  & \multicolumn{1}{c|}{}                       & \multicolumn{1}{c|}{2048} & \multicolumn{1}{c|}{[18,18,18]}    & $2^{16}$ & 5 018                                  & 22.65                               & 0.58                                \\ \cline{3-9} 
			&                                  & \multicolumn{1}{l|}{\multirow{5}{*}{True}}  & \multicolumn{1}{c|}{8192} & \multicolumn{1}{c|}{[60,40,40,60]} & $2^{40}$ & 33 310                                 & 81.22                               & 4.77                                \\ \cline{4-9} 
			&                                  & \multicolumn{1}{l|}{}                       & \multicolumn{1}{c|}{8192} & \multicolumn{1}{c|}{[40,21,21,40]} & $2^{21}$ & 31 311                                 & 84.36                               & 2.81                                \\ \cline{4-9} 
			&                                  & \multicolumn{1}{l|}{}                       & \multicolumn{1}{c|}{4096} & \multicolumn{1}{c|}{[40,20,20]}    & $2^{21}$ & 11 507                                 & 79.00                               & 0.67                                \\ \cline{4-9} 
			&                                  & \multicolumn{1}{l|}{}                       & \multicolumn{1}{c|}{4096} & \multicolumn{1}{c|}{[40,20,40]}    & $2^{20}$ & 11 656                                 & 80.79                               & 0.69                                \\ \cline{4-9} 
			&                                  & \multicolumn{1}{l|}{}                       & \multicolumn{1}{c|}{2048} & \multicolumn{1}{c|}{[18,18,18]}    & $2^{16}$ & 3 869                                  & 70.12                               & 0.16                                \\ \hline
			\multirow{12}{*}{$M_2$}     & Local                            & \multicolumn{4}{c|}{}                                                                                                   & 4.80                                   & 92.84                               & 0                                   \\ \cline{2-9} 
			& Split (plaintext)                & \multicolumn{4}{c|}{}                                                                                                   & 8.67                                   & 92.84                               & 60.12e-6                            \\ \cline{2-9} 
			& \multirow{10}{*}{Split (HE)}     & \multicolumn{1}{c|}{\multirow{5}{*}{False}} & \multicolumn{1}{c|}{8192} & \multicolumn{1}{c|}{[60,40,40,60]} & $2^{40}$ & 118 518                                & 81.40                               & 238.71                              \\ \cline{4-9} 
			&                                  & \multicolumn{1}{c|}{}                       & \multicolumn{1}{c|}{8192} & \multicolumn{1}{c|}{[40,21,21,40]} & $2^{21}$ & n/a                                    & n/a                                 & n/a                                 \\ \cline{4-9} 
			&                                  & \multicolumn{1}{c|}{}                       & \multicolumn{1}{c|}{4096} & \multicolumn{1}{c|}{[40,20,20]}    & $2^{21}$ & 31 711                                 & 81.38                               & 12.86                               \\ \cline{4-9} 
			&                                  & \multicolumn{1}{c|}{}                       & \multicolumn{1}{c|}{4096} & \multicolumn{1}{c|}{[40,20,40]}    & $2^{20}$ & 31 791                                 & 80.12                               & 14.60                               \\ \cline{4-9} 
			&                                  & \multicolumn{1}{c|}{}                       & \multicolumn{1}{c|}{2048} & \multicolumn{1}{c|}{[18,18,18]}    & $2^{16}$ & 12 087                                 & 22.65                               & 1.786                               \\ \cline{3-9} 
			&                                  & \multicolumn{1}{c|}{\multirow{5}{*}{True}}  & \multicolumn{1}{c|}{8192} & \multicolumn{1}{c|}{[60,40,40,60]} & $2^{40}$ & 79 637                                 & 81.46                               & 17.57                               \\ \cline{4-9} 
			&                                  & \multicolumn{1}{c|}{}                       & \multicolumn{1}{c|}{8192} & \multicolumn{1}{c|}{[40,21,21,40]} & $2^{21}$ & 56 356                                 & 84.46                               & 9.25                                \\ \cline{4-9} 
			&                                  & \multicolumn{1}{c|}{}                       & \multicolumn{1}{c|}{4096} & \multicolumn{1}{c|}{[40,20,20]}    & $2^{21}$ & 20 790                                 & 84.69                               & 1.82                                \\ \cline{4-9} 
			&                                  & \multicolumn{1}{c|}{}                       & \multicolumn{1}{c|}{4096} & \multicolumn{1}{c|}{[40,20,40]}    & $2^{20}$ & 20 521                                 & 81.65                               & 1.82                                \\ \cline{4-9} 
			&                                  & \multicolumn{1}{c|}{}                       & \multicolumn{1}{c|}{2048} & \multicolumn{1}{c|}{[18,18,18]}    & $2^{16}$ & 8 113                                  & 73.82                               & 0.36                                \\ \hline
		\end{tabular}%
	}
	\label{tab:trainingTestingResultsMITBIH}
\end{table*}
However, with BE, $s_1$ produced only $79\%$ accuracy. 
Compared to the bigger sets of parameters with $\mathcal{P}=8192$, $s_1$ achieves higher accuracy while requiring much lower training time and communication overhead. The result when using the first set of parameters with $\mathcal{P}=8192$ is close ($85.31\%$), but with a much longer training time (3.67 times longer) and communication overhead (8.43 times higher). We observe that in some cases, training with BE results in better testing accuracy, while in some other, it leads to accuracy reduction. On the other hand, training on EAMs with BE is 23-35\% faster. The amount of communication overhead is also significantly reduced (up to 13 times). Interestingly, using the HE parameters with $\mathcal{P}=2048$ and without BE drastically reduces accuracy to $22.65\%$, however, with BE, it only reduces accuracy to $70.12\%$. Hence, the effect of BE in NN training needs further study.

Although $M_2$ achieves better accuracy than $M_1$ on plaintext data, it does not provide better results in encrypted version. The best accuracy of $M_2$ on encrypted version is $84.69\%$ using $s_1$ and with BE. Compared to the configuration that achieves best results for $M_1$ at 85.41\%, best configuration for $M_2$ takes 1.39 times more to train but incurs less communication overhead (about 2.5 times less). This is because the best configuration for $M_2$ occurs when BE is used, and best configuration for $M_1$ occurs when BE is not used. In general, training $M_2$ with EAMs takes 2-3 times longer to train and 3-6 times more communication overhead compared to $M_1$ with the same HE configuration. 
% Furthermore, accuracy drop is much higher when training on encrypted data for $M_2$, the reason maybe because HE becomes less effective once the length of the vector becomes bigger.%

% \vspace*{-\baselineskip}
\begin{table*}
\centering
\caption{Training and testing results on the PTB-XL dataset}
\label{tab:trainingTestingResultsPTBXL}
\resizebox{\textwidth}{!}{%
\begin{tabular}{c|c|l|c|c|c|c|c|c} 
\hline
\multirow{2}{*}{Network} & \multirow{2}{*}{Type of Network} & \multicolumn{4}{c|}{HE Parameters}                                   & \multirow{2}{*}{Training duration per epoch (s)} & \multirow{2}{*}{Test accuracy (\%)} & \multirow{2}{*}{Communication per epoch (Tb)}  \\ 
\cline{3-6}
                         &                                  & \multicolumn{1}{c|}{BE} & $\mathcal{P}$ & $\mathcal{C}$   & $\Delta$ &                                    &                                     &                                    \\ 
\hline
\multirow{6}{*}{$M_3$}   & Local                            & \multicolumn{4}{c|}{}                                                & 10.56                                  & 67.68                               & 0                                    \\ 
\cline{2-9}
                         & Split (plaintext)                & \multicolumn{4}{c|}{}                                                & 15.55                                  & 67.68                               & 316.9e-6                           \\ 
\cline{2-9}
                         & \multirow{4}{*}{Split (HE)}      & \multirow{3}{*}{True}   & 8192          & {[}40,21,21,40] & $2^{21}$   & 72 534                                 & 65.42                               & 115.64                              \\ 
\cline{4-9}
                         &                                  &                         & 4096          & {[}40,20,20]    & $2^{21}$   & 24 061                                 & 64.22                               & 18.20                               \\ 
\cline{4-9}
                         &                                  &                         & 4096          & {[}40,20,40]    & $2^{20}$   & 22 570                                 & 65.23                               & 18.77                               \\ 
\cline{4-9}
                         &                                  &                         & 2048          & {[}18,18,18]    & $2^{16}$   & 7 605                                  & 65.33                               & 1.93                                \\
\hline
\end{tabular}
}
\end{table*}

The results of training different settings of $M_3$ on the PTB-XL dataset are reported in \autoref{tab:trainingTestingResultsPTBXL}. We only train the split version of $M_3$ with an EAM using BE. The HE set of parameters $\mathcal{P}=8192$, $\mathcal{C}=[40, 21, 21, 40]$, $\Delta=2^{21}$ achieves the best test accuracy at 65.42\%. This result is only 2.26\% lower than the result obtained by the plaintext version. However, this set of parameters incurs the most communication overhead (115.64 Tb per epoch) and takes the longest to train (72 534 seconds per epoch). Overall, test accuracies achieved by different HE parameters are quite close to each other, with 65.42\% being the lowest. Interestingly, the smallest set of HE parameters $\mathcal{P}=2048, \mathcal{C}=[18, 18, 18], \Delta=2^{16}$ achieves second-best accuracy at 65.33\%, requiring %taking %only%
about 1/10 of the training duration and 1/100 of the communication overhead compared to $\mathcal{P}=8192$. The two sets of parameters with $\mathcal{P}=4096$ produce quite similar results, %and% %they take%
roughly taking same amount of time and communication overhead to train.

Through our experiments, we see that training on EAMs can produce very optimistic results, with accuracy dropping by 2-3\% for the best sets of HE parameters. Furthermore, training using BE can significantly reduce the amount of training time and communication overhead needed, while producing comparable results %compared to%
when it come to training without BE. The set of parameters with $\mathcal{P}=8192$ always achieve the highest test accuracy, though incurring the highest communication overhead and the longest training time. The set of parameters with $\mathcal{P}=4096$ can offer a good trade-off as they can produce on-par accuracy with $\mathcal{P}=8192$, while requiring %much% 
significantly less communication and training time. Experimental results show that with the smallest set of HE parameters $\mathcal{P}=2048$, $\mathcal{C}=[18, 18, 18]$, $\Delta=2^{16}$, %experimental results show that%
%it requires%
the least amount of communication and training time is required. In addition, this only works well when used together with BE. When training the network $M_3$ on the PTB-XL dataset, this set of parameters produces even better test accuracy compared to $\mathcal{P}=4096$. However, this result may be because the network $M_3$ is %very%
small. The test accuracy on the plaintext version is $67.68\%$, hence the noises produced by the HE algorithm %have not played%
do not yet have a significant role in reducing the model's accuracy. %yet%.

\begin{remark}
As can be seen in \autoref{tab:trainingTestingResultsMITBIH}, the accuracy of the same algorithm varies greatly under different CKKS parameters. %Choosing%
The parameter selection in CKKS is not %easy%
evident as %one%
a set of the parameter may result in efficient computation for one application %, but the same set of a parameter may result%
but also in poor performance for another application. In addition, CKKS uses approximate arithmetic %instead of%
rather than exact arithmetic, in the sense that once %we finish%
computation is finished %we might get a%
the result may slightly differ compared to that of a direct computation %be slightly different %result than if we %did%
%performed the computation directly%
~\cite{clet2021bfv}. Hence, it is still open to research whether %a set of %
for %a%
specific applications, a closed form of relation for the set of parameters can be used to measure the accuracy. %is still open to research. %a problem to be researched%
However, training both models $M_{1}$ and $M_{2}$ with a different set of parameters, we observe:
\begin{itemize}
    \item Training model with BE yields better outcomes than without BE. This pattern can be seen in $M_{2}$  and also in $M_{1}$ with two exceptions ($\mathcal{P}=8192, \mathcal{C}=[60, 40, 40, 60], \Delta=2^{40}$) and ($\mathcal{P}=4096, \mathcal{C}=[40, 20, 20], \Delta=2^{21}$).
    \item Training without BE, higher contexts yield %s%
    better results than lower. %ones.%
    This pattern can be seen in $M_2$ and %can also be found in%  
    also in $M_{1}$ with two exceptions $(\mathcal{P}=4096, \mathcal{C}=[40, 20, 20], \Delta=2^{21})$ and $(\mathcal{P}=4096, \mathcal{C}=[40, 20, 40], \Delta=2^{20})$.
    \item Also in both models $M_{1}$ and $M_{2}$, BE is %well suited%
    suitable for lower context $(\mathcal{P}=2048, \mathcal{C}=[18, 18, 18], \Delta=2^{16})$.
\end{itemize}
\end{remark}

% \noindent \textbf{Open Science and Reproducible Research:} 
% %\label{subsec:openscience}
% To support open science and reproducible research, and provide researchers with opportunity to use, test, and extend our work, our source code has been anonymized and made available online\footnote{\href{https://anonymous.4open.science/r/split-learning-1D-HE-4BB0/README.md}{https://anonymous.4open.science/r/split-learning-1D-HE}}.
%\footnote{\href{https://anonymous.4open.science/r/split-learning-1D-HE-4BB0/README.md}{https://anonymous.4open.science/r/split-learning-1D-HE-4BB0/README.md}.
\noindent \textbf{Open Science and Reproducible Research:} 
%\label{subsec:openscience}
To support open science and reproducible research, and provide researchers with opportunity to use, test, and extend our work, source code used for the evaluations is publicly available\footnote{\href{https://github.com/khoaguin/HESplitNet}{https://github.com/khoaguin/HESplitNet}}.
\section{Conclusion}
\label{sec:Conclusion}
In this paper, we focused on training ML models in a privacy-preserving way. We used the concept of SL in combination with HE and constructed protocols allowing a client to train a model in collaboration with a server without sharing valuable information about the raw data. To the best of our knowledge, this is the first work that uses SL on encrypted data. Our experiments show that our approach has achieved high accuracy, especially when compared with less secure approaches that combine SL with differential privacy. The limitation of our work is having only one client in the protocol. While extending the protocol to multiple clients is an important task, it requires us to rely on a multi-key HE scheme, which is beyond scope of this work and remains to be addressed in future works.
% \subsubsection{Acknowledgements} Please place your acknowledgments at
% the end of the paper, preceded by an unnumbered run-in heading (i.e.
% 3rd-level heading).

% \newpage
\bibliographystyle{splncs04}
\bibliography{main}

\appendix

\section{Preliminaries}
\label{sec:preliminaries}
\subsection{Convolutional Neural Network}
% CNN is a special type of neural network using convolution layers to extract features from data~\cite{lecun1998CNN}. %When talking about CNN, people often refer to a 2-dimensional (2D) CNN used to deal with 2D data, such as images. However, there are also other types of CNNs, such as 1D CNN~\cite{li2017classification,abuadbba2020can}, or 3D CNN~\cite{Tran3DCNN2015}. 
In this work, we employ a 1D CNN~\cite{li2017classification,abuadbba2020can} as a feature detector and classifier for two ECG heartbeat datasets, namely MIT-BIH~\cite{moody2001impact} and PTB-XL~\cite{wagner2020ptb}. The employed 1D CNN has the following stacked layers:%, namely:
\begin{itemize}
    \item \underline{Conv1D}: %The idea behind Conv1D is 
    Is used to swipe a kernel of adjustable weights over a 1D input signal. The %output of%
    Conv1D outputs the AMs %that%
    capturing %some%
    feature information from said input signal. \autoref{fig:1dcnn} visualizes the Conv1D operation and %also%
    shows the difference between a Conv1D and a Conv2D layer.
    \item \underline{Leaky ReLU~\cite{maas2013rectifier}}: %Leaky Rectified Linear Unit %, or 
    It is a non-linear function that can be described as $f(x) = x$, if ${x\geq 0}$, and $f(x)=\alpha x$ if $x<0$, where $\alpha$ is a small number, such as 0.01.
    \item \underline{Max Pooling}: In CNN, Max Pooling compresses the input, focusing on important elements and allowing slight input changes with minimal impact on the pooled version~\cite{maxpooling2010}.
    % In CNN, it creates a smaller and, %that is% whose resolution is lower but more 
    % compressed version of the input AM, %. Said %These% 
    % %compressed AMs only keep the crucial components 
    % focusing on important elements and adding a small amount of input translation invariance to the network~\cite{maxpooling2010}. Input translation invariance signifies %means%
    % that translating the input by a small amount %does \textit{not} %cause%
    % %have a big%
    % has an insignificant effect on the pooled AMs.%
    
    \item \underline{Fully Connected (FC)}: The FC layer has one output unit connected to all %the%
    input units, unlike %In contrast to%
    the convolution layer, where one output unit in an AM only connects to a small area in the input signal. %an%
    \item \underline{Softmax}: A Softmax activation function takes a vector of $k$ real numbers, e.g. $\mathbf{z}=(z_1,\ldots,z_k)\mathbb{R}^k$, and outputs a probability distribution consisting of $k$ probabilities. Each probability in the output vector is calculated as follows: 
    \begin{equation}
        \sigma(\mathbf{z})_i = \frac{e^{z_i}}{\sum_{j=1}^{k} e^{z_j}}.
    \end{equation}
\end{itemize}
We use the 1D CNN as a supervised learning method, where both the input data and corresponding labels are needed to train our network.

\subsection{Homomorphic Encryption}
HE is an emerging cryptographic technique for computations on encrypted data. HE schemes are divided into three main categories according to their functionality: Partial HE~\cite{paillier1999public}, leveled (or somewhat) HE~\cite{cheon2017homomorphic}, and fully HE~\cite{gentry2009fully}. Each scheme has its own benefits and disadvantages. In this work, we use the CKKS Leveled HE scheme~\cite{cheon2017homomorphic}. CKKS allows users to do additions and a limited number of multiplications on vectors of complex values (and hence, real values too). Prior to the encryption, CKKS encodes a message $\mathbf{z} \in \mathbb{C}^{N/2}$ to a ring of polynomials over the integers $\mathbb{Z}[X]/\left(X^N+1\right)$. Working with polynomials over rings of integers is a good trade-off between security and efficiency compared to standard computations on vectors. During encoding, the vector $\mathbf{z}$ is multiplied by a \textit{scaling factor} $\Delta$ to keep a level of precision. The encoded message is encrypted, and the resulted ciphertext is an element $c \in \left(\mathbb{Z}_q[X]/\left(X^N +1\right)\right)^2$. Ciphertexts can then be added or multiplied together. An issue arises during multiplication, is that the term $\Delta^2$ appears in the ciphertext result. To address this, CKKS deploys a rescaling operation to keep the scaling factor $\Delta$ constant. The inverse procedure needs to be followed for the decryption; that is, the ciphertext will be decrypted first, and then the encoded message will be decoded and multiplied by $1/\Delta$ to recover $\mathbf{z}' \in \mathbb{C}^{N/2}$. The number of allowed multiplications is predefined by a list of prime numbers. To build this list, the authors first choose $\left(p_1, \dots, p_L, q_0\right)$ primes, where each $p_{\ell} \approx \Delta$ and $q_0 > \Delta$. Finally, they set $q_L = \prod_{1}^L p_l\cdot q_0$, where $q_{L} = q$ -- the order of $\mathbb{Z}_q$ -- and the list is $(q_L, \dots, q_0)$. After each multiplication, an element is deleted from the list. However, according to~\cite{cheon2017homomorphic}, the security of CCKS is based on the ratio $N/q$. Hence, to maintain the same level of security as we increase $q$, we also need to increase $N$-the degree of the polynomials and hence, computational costs. 
Summing up, the most important parameters of the CKKS scheme are:
	
\begin{enumerate}[\bfseries 1.]
    \item \textbf{Polynomial Modulus $\mathcal{P}$}: Naturally, this parameter has a direct impact on the scheme's efficiency and security. According to~\cite{cheon2017homomorphic}, this value needs to be a power of two. Common values include 2048, 4096, 8192, 16384 and 32768.
    \item \textbf{Coefficient Modulus $\mathcal{C}$}: A list of primes that define current scheme's level. After each multiplication a different prime is used as coefficient modulus. Hence, no more multiplications are allowed when all primes are used. %, no more multiplications are allowed%.
    \item \textbf{Scaling Factor $\Delta$}: This is a constant positive number multiplied by the plaintext message during encoding to maintain a certain level of precision.
\end{enumerate}	

 \begin{figure}
    \centering
    \begin{minipage}{0.45\textwidth}
        \centering
 \begin{adjustbox}{width=\textwidth}
		\tikzset{every picture/.style={line width=0.75pt}} 
\begin{tikzpicture}[x=0.75pt,y=0.75pt,yscale=-1,xscale=1]
\draw  [fill={rgb, 255:red, 255; green, 244; blue, 199 }  ,fill opacity=1 ] (167,155.9) -- (179.9,143) -- (210,143) -- (210,265.1) -- (197.1,278) -- (167,278) -- cycle ; \draw   (210,143) -- (197.1,155.9) -- (167,155.9) ; \draw   (197.1,155.9) -- (197.1,278) ;
\draw  [fill={rgb, 255:red, 200; green, 218; blue, 164 }  ,fill opacity=1 ] (197,165.6) -- (209.6,153) -- (239,153) -- (239,255.5) -- (226.4,268.1) -- (197,268.1) -- cycle ; \draw   (239,153) -- (226.4,165.6) -- (197,165.6) ; \draw   (226.4,165.6) -- (226.4,268.1) ;
\draw  [fill={rgb, 255:red, 218; green, 246; blue, 242 }  ,fill opacity=1 ] (225.2,166.8) -- (239,153) -- (271.2,153) -- (271.2,255.2) -- (257.4,269) -- (225.2,269) -- cycle ; \draw   (271.2,153) -- (257.4,166.8) -- (225.2,166.8) ; \draw   (257.4,166.8) -- (257.4,269) ;
\draw  [fill={rgb, 255:red, 197; green, 181; blue, 175 }  ,fill opacity=1 ] (104,153.9) -- (116.9,141) -- (147,141) -- (147,284.1) -- (134.1,297) -- (104,297) -- cycle ; \draw   (147,141) -- (134.1,153.9) -- (104,153.9) ; \draw   (134.1,153.9) -- (134.1,297) ;
\draw  [fill={rgb, 255:red, 255; green, 244; blue, 199 }  ,fill opacity=1 ] (258.4,153.9) -- (271.3,141) -- (301.4,141) -- (301.4,262.1) -- (288.5,275) -- (258.4,275) -- cycle ; \draw   (301.4,141) -- (288.5,153.9) -- (258.4,153.9) ; \draw   (288.5,153.9) -- (288.5,275) ;
\draw  [fill={rgb, 255:red, 200; green, 218; blue, 164 }  ,fill opacity=1 ] (289,165.6) -- (301.6,153) -- (331,153) -- (331,255.4) -- (318.4,268) -- (289,268) -- cycle ; \draw   (331,153) -- (318.4,165.6) -- (289,165.6) ; \draw   (318.4,165.6) -- (318.4,268) ;
\draw  [fill={rgb, 255:red, 218; green, 246; blue, 242 }  ,fill opacity=1 ] (318.4,163.98) -- (329.38,153) -- (355,153) -- (355,257.02) -- (344.02,268) -- (318.4,268) -- cycle ; \draw   (355,153) -- (344.02,163.98) -- (318.4,163.98) ; \draw   (344.02,163.98) -- (344.02,268) ;
\draw  [fill={rgb, 255:red, 195; green, 220; blue, 252 }  ,fill opacity=1 ] (429.4,149.9) -- (442.3,137) -- (472.4,137) -- (472.4,258.1) -- (459.5,271) -- (429.4,271) -- cycle ; \draw   (472.4,137) -- (459.5,149.9) -- (429.4,149.9) ; \draw   (459.5,149.9) -- (459.5,271) ;
\draw  [fill={rgb, 255:red, 197; green, 181; blue, 175 }  ,fill opacity=1 ] (284.5,291.9) -- (297.4,279) -- (327.5,279) -- (327.5,365.16) -- (314.6,378.06) -- (284.5,378.06) -- cycle ; \draw   (327.5,279) -- (314.6,291.9) -- (284.5,291.9) ; \draw   (314.6,291.9) -- (314.6,378.06) ;
\draw  [fill={rgb, 255:red, 246; green, 200; blue, 185 }  ,fill opacity=1 ] (315.6,291.9) -- (328.5,279) -- (358.6,279) -- (358.6,364.17) -- (345.7,377.07) -- (315.6,377.07) -- cycle ; \draw   (358.6,279) -- (345.7,291.9) -- (315.6,291.9) ; \draw   (345.7,291.9) -- (345.7,377.07) ;
\draw [line width=1.5]    (147,207) -- (163,207) ;
\draw [shift={(167,207)}, rotate = 180] [fill={rgb, 255:red, 0; green, 0; blue, 0 }  ][line width=0.08]  [draw opacity=0] (11.61,-5.58) -- (0,0) -- (11.61,5.58) -- cycle    ;
\draw [line width=1.5]    (356,207) -- (424,207) ;
\draw [shift={(428,207)}, rotate = 180] [fill={rgb, 255:red, 0; green, 0; blue, 0 }  ][line width=0.08]  [draw opacity=0] (11.61,-5.58) -- (0,0) -- (11.61,5.58) -- cycle    ;
\draw [line width=1.5]    (437,298) -- (362,298) ;
\draw [shift={(358,298)}, rotate = 360] [fill={rgb, 255:red, 0; green, 0; blue, 0 }  ][line width=0.08]  [draw opacity=0] (11.61,-5.58) -- (0,0) -- (11.61,5.58) -- cycle    ;
\draw  [dash pattern={on 0.84pt off 2.51pt}] (97,129) -- (376.5,129) -- (376.5,389) -- (97,389) -- cycle ;
\draw  [dash pattern={on 0.84pt off 2.51pt}] (385,129) -- (485.5,129) -- (485.5,388) -- (385,388) -- cycle ;
\draw [line width=1.5]    (437,271) -- (437,298) ;
\draw [line width=1.5]    (359,325) -- (453,325) ;
\draw [line width=1.5]    (452,325) -- (452,278) ;
\draw [shift={(452,274)}, rotate = 90] [fill={rgb, 255:red, 0; green, 0; blue, 0 }  ][line width=0.08]  [draw opacity=0] (11.61,-5.58) -- (0,0) -- (11.61,5.58) -- cycle    ;
\draw (174.42,260.07) node [anchor=north west][inner sep=0.75pt]  [rotate=-269.64] [align=left] {1D Convolutional};
\draw (265.42,260.07) node [anchor=north west][inner sep=0.75pt]  [rotate=-269.64] [align=left] {1D Convolutional};
\draw (202.42,253.07) node [anchor=north west][inner sep=0.75pt]  [rotate=-269.64] [align=left] {Leaky Relu};
\draw (295.42,249.07) node [anchor=north west][inner sep=0.75pt]  [rotate=-269.64] [align=left] {Leaky Relu};
\draw (235.42,249.07) node [anchor=north west][inner sep=0.75pt]  [rotate=-269.64] [align=left] {Max Pooling};
\draw (325.42,249.07) node [anchor=north west][inner sep=0.75pt]  [rotate=-269.64] [align=left] {Max Pooling};
\draw (161,361) node [anchor=north west][inner sep=0.75pt]   [align=left] {\textbf{Client-side}};
\draw (389,362) node [anchor=north west][inner sep=0.75pt]   [align=left] {\textbf{Server-side}};
\draw (325.42,350.83) node [anchor=north west][inner sep=0.75pt]  [rotate=-269.64] [align=left] {Softmax};
\draw (290.42,350.28) node [anchor=north west][inner sep=0.75pt]  [rotate=-269.64] [align=left] {Output};
\draw (115.42,280.07) node [anchor=north west][inner sep=0.75pt]  [rotate=-269.64] [align=left] {Training Input Data};
% \draw (357,167) node [anchor=north west][inner sep=0.75pt]   [align=left] {\ \ \ Wireless\\communication};
% \draw (377,328) node [anchor=north west][inner sep=0.75pt]   [align=left] {Gradients};
\draw (435.42,246.07) node [anchor=north west][inner sep=0.75pt]  [rotate=-269.64] [align=left] {Fully Connected};
\end{tikzpicture}
  \end{adjustbox}
		\caption{U-shaped SL}
		\label{fig:u-shapedSL}

\end{minipage}\hfill
    \begin{minipage}{0.5\textwidth}
        \centering
    \resizebox{\textwidth}{!}{%
        \tikzset{every picture/.style={line width=0.75pt}}       
        
        \tikzset{every picture/.style={line width=0.75pt}} %set default line width to 0.75pt        
        
        \begin{tikzpicture}[x=0.75pt,y=0.75pt,yscale=-1,xscale=1]
            %uncomment if require: \path (0,300); %set diagram left start at 0, and has height of 300
            
            %Shape: Rectangle [id:dp8573947153710666] 
            \draw  [color={rgb, 255:red, 0; green, 0; blue, 0 }  ,draw opacity=1 ] (45.5,149) -- (148.5,149) -- (148.5,229) -- (45.5,229) -- cycle ;
            %Shape: Rectangle [id:dp9299829119675594] 
            \draw  [color={rgb, 255:red, 74; green, 144; blue, 226 }  ,draw opacity=1 ] (80.5,22) -- (160,22) -- (160,97.5) -- (80.5,97.5) -- cycle ;
            %Shape: Rectangle [id:dp9836898310539085] 
            \draw  [color={rgb, 255:red, 74; green, 144; blue, 226 }  ,draw opacity=1 ] (282.5,10) -- (359,10) -- (359,82.5) -- (282.5,82.5) -- cycle ;
            %Shape: Rectangle [id:dp9621068931040797] 
            \draw  [color={rgb, 255:red, 0; green, 0; blue, 0 }  ,draw opacity=1 ] (463,99) -- (539.5,99) -- (539.5,146) -- (463,146) -- cycle ;
            %Shape: Grid [id:dp5963779104456358] 
            \draw  [draw opacity=0] (244,95) -- (425.5,95) -- (425.5,255.5) -- (244,255.5) -- cycle ; \draw  [color={rgb, 255:red, 0; green, 0; blue, 0 }  ,draw opacity=1 ] (244,95) -- (244,255.5)(264,95) -- (264,255.5)(284,95) -- (284,255.5)(304,95) -- (304,255.5)(324,95) -- (324,255.5)(344,95) -- (344,255.5)(364,95) -- (364,255.5)(384,95) -- (384,255.5)(404,95) -- (404,255.5)(424,95) -- (424,255.5) ; \draw  [color={rgb, 255:red, 0; green, 0; blue, 0 }  ,draw opacity=1 ] (244,95) -- (425.5,95)(244,115) -- (425.5,115)(244,135) -- (425.5,135)(244,155) -- (425.5,155)(244,175) -- (425.5,175)(244,195) -- (425.5,195)(244,215) -- (425.5,215)(244,235) -- (425.5,235)(244,255) -- (425.5,255) ; \draw  [color={rgb, 255:red, 0; green, 0; blue, 0 }  ,draw opacity=1 ]  ;
            %Shape: Grid [id:dp8005998787827865] 
            \draw  [draw opacity=0] (193,97) -- (213.5,97) -- (213.5,258.5) -- (193,258.5) -- cycle ; \draw  [color={rgb, 255:red, 0; green, 0; blue, 0 }  ,draw opacity=1 ] (193,97) -- (193,258.5)(213,97) -- (213,258.5) ; \draw  [color={rgb, 255:red, 0; green, 0; blue, 0 }  ,draw opacity=1 ] (193,97) -- (213.5,97)(193,117) -- (213.5,117)(193,137) -- (213.5,137)(193,157) -- (213.5,157)(193,177) -- (213.5,177)(193,197) -- (213.5,197)(193,217) -- (213.5,217)(193,237) -- (213.5,237)(193,257) -- (213.5,257) ; \draw  [color={rgb, 255:red, 0; green, 0; blue, 0 }  ,draw opacity=1 ]  ;
            %Shape: Rectangle [id:dp7094026277789093] 
            \draw  [color={rgb, 255:red, 74; green, 144; blue, 226 }  ,draw opacity=1 ][dash pattern={on 4.5pt off 4.5pt}] (240,90) -- (310,90) -- (310,161.5) -- (240,161.5) -- cycle ;
            %Shape: Rectangle [id:dp460639400683549] 
            \draw  [color={rgb, 255:red, 74; green, 144; blue, 226 }  ,draw opacity=1 ][dash pattern={on 4.5pt off 4.5pt}] (359,187) -- (429,187) -- (429,258.5) -- (359,258.5) -- cycle ;
            %Shape: Rectangle [id:dp3077675403085859] 
            \draw  [color={rgb, 255:red, 128; green, 128; blue, 128 }  ,draw opacity=1 ][fill={rgb, 255:red, 155; green, 155; blue, 155 }  ,fill opacity=1 ] (405,95.5) -- (424.5,95.5) -- (424.5,113.5) -- (405,113.5) -- cycle ;
            %Curve Lines [id:da9884084329614488] 
            \draw [color={rgb, 255:red, 0; green, 0; blue, 0 }  ,draw opacity=1 ]   (310,127) .. controls (405.02,128.49) and (235.21,204.73) .. (356.65,208.45) ;
            \draw [shift={(358.5,208.5)}, rotate = 181.37] [color={rgb, 255:red, 0; green, 0; blue, 0 }  ,draw opacity=1 ][line width=0.75]    (10.93,-3.29) .. controls (6.95,-1.4) and (3.31,-0.3) .. (0,0) .. controls (3.31,0.3) and (6.95,1.4) .. (10.93,3.29)   ;
            %Shape: Rectangle [id:dp4390834825704706] 
            \draw  [color={rgb, 255:red, 74; green, 144; blue, 226 }  ,draw opacity=1 ][dash pattern={on 4.5pt off 4.5pt}] (189,89) -- (217.5,89) -- (217.5,145.5) -- (189,145.5) -- cycle ;
            %Shape: Rectangle [id:dp5497849642791494] 
            \draw  [color={rgb, 255:red, 74; green, 144; blue, 226 }  ,draw opacity=1 ][dash pattern={on 4.5pt off 4.5pt}] (189,209.5) -- (217.5,209.5) -- (217.5,261) -- (189,261) -- cycle ;
            %Shape: Rectangle [id:dp831465060670659] 
            \draw  [color={rgb, 255:red, 155; green, 155; blue, 155 }  ,draw opacity=1 ][fill={rgb, 255:red, 128; green, 128; blue, 128 }  ,fill opacity=1 ] (194,157.5) -- (213.5,157.5) -- (213.5,175.5) -- (194,175.5) -- cycle ;
            %Straight Lines [id:da769886376297227] 
            \draw [color={rgb, 255:red, 0; green, 0; blue, 0 }  ,draw opacity=1 ]   (202.5,145.5) -- (202.5,204.5) ;
            \draw [shift={(202.5,206.5)}, rotate = 270] [color={rgb, 255:red, 0; green, 0; blue, 0 }  ,draw opacity=1 ][line width=0.75]    (10.93,-3.29) .. controls (6.95,-1.4) and (3.31,-0.3) .. (0,0) .. controls (3.31,0.3) and (6.95,1.4) .. (10.93,3.29)   ;
            %Curve Lines [id:da5072714715369583] 
            \draw [color={rgb, 255:red, 155; green, 155; blue, 155 }  ,draw opacity=1 ]   (423.5,102.5) .. controls (435.32,99.55) and (421.91,120.85) .. (460.69,120.53) ;
            \draw [shift={(462.5,120.5)}, rotate = 178.6] [color={rgb, 255:red, 155; green, 155; blue, 155 }  ,draw opacity=1 ][line width=0.75]    (10.93,-3.29) .. controls (6.95,-1.4) and (3.31,-0.3) .. (0,0) .. controls (3.31,0.3) and (6.95,1.4) .. (10.93,3.29)   ;
            %Curve Lines [id:da7578465093238256] 
            \draw [color={rgb, 255:red, 155; green, 155; blue, 155 }  ,draw opacity=1 ]   (192.5,163.5) .. controls (167.54,171.18) and (161.93,180.7) .. (152.68,192.95) ;
            \draw [shift={(151.5,194.5)}, rotate = 307.57] [color={rgb, 255:red, 155; green, 155; blue, 155 }  ,draw opacity=1 ][line width=0.75]    (10.93,-3.29) .. controls (6.95,-1.4) and (3.31,-0.3) .. (0,0) .. controls (3.31,0.3) and (6.95,1.4) .. (10.93,3.29)   ;
            %Curve Lines [id:da7189609412416013] 
            \draw [color={rgb, 255:red, 74; green, 144; blue, 226 }  ,draw opacity=1 ]   (185.5,104.5) .. controls (162.23,118.86) and (163.47,135.17) .. (113.04,99.59) ;
            \draw [shift={(111.5,98.5)}, rotate = 35.43] [color={rgb, 255:red, 74; green, 144; blue, 226 }  ,draw opacity=1 ][line width=0.75]    (10.93,-3.29) .. controls (6.95,-1.4) and (3.31,-0.3) .. (0,0) .. controls (3.31,0.3) and (6.95,1.4) .. (10.93,3.29)   ;
            %Curve Lines [id:da1639863827892727] 
            \draw [color={rgb, 255:red, 74; green, 144; blue, 226 }  ,draw opacity=1 ]   (258.5,87.5) .. controls (298.1,57.8) and (242.63,79.07) .. (280.33,50.38) ;
            \draw [shift={(281.5,49.5)}, rotate = 143.13] [color={rgb, 255:red, 74; green, 144; blue, 226 }  ,draw opacity=1 ][line width=0.75]    (10.93,-3.29) .. controls (6.95,-1.4) and (3.31,-0.3) .. (0,0) .. controls (3.31,0.3) and (6.95,1.4) .. (10.93,3.29)   ;
            
            % Text Node
            \draw (154,267) node [anchor=north west][inner sep=0.75pt]   [align=left] {\textcolor[rgb]{0,0,0}{1D Conv layer}};
            % Text Node
            \draw (278,269) node [anchor=north west][inner sep=0.75pt]   [align=left] {\textcolor[rgb]{0,0,0}{2D Conv layer}};
            % Text Node
            \draw (467,105) node [anchor=north west][inner sep=0.75pt]   [align=left] {\textcolor[rgb]{0,0,0}{A pixel in }\\\textcolor[rgb]{0,0,0}{an image}};
            % Text Node
            \draw (89,32) node [anchor=north west][inner sep=0.75pt]  [color={rgb, 255:red, 74; green, 144; blue, 226 }  ,opacity=1 ] [align=left] {\textcolor[rgb]{0.29,0.56,0.89}{1D Kernel }\\\textcolor[rgb]{0.29,0.56,0.89}{Feature}\\\textcolor[rgb]{0.29,0.56,0.89}{Detector}};
            % Text Node
            \draw (58,161) node [anchor=north west][inner sep=0.75pt]   [align=left] {\textcolor[rgb]{0,0,0}{A value in }\\\textcolor[rgb]{0,0,0}{the ECG }\\\textcolor[rgb]{0,0,0}{signal vector}};
            % Text Node
            \draw (287,16) node [anchor=north west][inner sep=0.75pt]   [align=left] {\textcolor[rgb]{0.29,0.56,0.89}{2D Kernel }\\\textcolor[rgb]{0.29,0.56,0.89}{Feature}\\\textcolor[rgb]{0.29,0.56,0.89}{Detector}};
            
        \end{tikzpicture}
    }%
    \caption{Conv1D vs Conv2D}
    \label{fig:1dcnn}
\end{minipage} \hfill
    \begin{minipage}{0.45\textwidth}
        \centering
    \resizebox{\textwidth}{!}{%
  \begin{sequencediagram}
    \newthread{A}{Client}{}
    \newinst[9]{B}{Server}{}
    \begin{messcall}{A}{$m_1 = \langle t_1, \mathsf{PKE.Enc}(\mathsf{pk_{S}}, \mathsf{evk_{HE}}), \mathsf{HE.Enc}(\mathsf{pk_{HE}}, \mathrm{AMap}), \sigma_{C}(H_1) \rangle$}{B}{}
    
    \begin{callself}{B}{$\mathsf{HE.Eval}(\mathsf{pk}_{\mathsf{HE}}, \mathrm{AMap})$}{$\mathsf{HE.Enc(pk_{HE}, out)}$}
    \end{callself}
    \begin{messcall}{B}{$m_2 = \langle t_2, \mathsf{HE.Enc(pk_{HE}, out)}, \sigma_{S}(H_2) \rangle$}{A}{}
   
   \begin{callself}{A}{Use $\mathsf{out}$ to compute gradients}{$\mathsf{grad}$}
   \end{callself}
   \begin{messcall}{A}{$m_3 = \langle t_3, \mathsf{PKE.Enc}(\mathsf{pk}_{C}, \mathsf{grad}), \sigma_{S}(H_3)\rangle$}{B}{}
   \begin{callself}{B}{Update parameters}{$\mathsf{grad'}$}
   \end{callself}
   \begin{messcall}{B}{$m_4 = \langle t_4, \mathsf{PKE.Enc(pk_{C}, grad'}), \sigma_{C}(H_4)\rangle$}{A}{}
   \end{messcall}
   \end{messcall}
   \end{messcall} 
   \end{messcall}
  \end{sequencediagram}}
  \caption{Running Phase}
  \label{fig:protocol}
\end{minipage}

	\end{figure}
\section{Datasets}
\label{sec: datasets}
\paragraph{\textbf{MIT-BIH:}}
We use the pre-processed dataset from~\cite{abuadbba2020can}, which is based on the MIT-BIH arrhythmia %(abnormal heart rhythm)
database~\cite{moody2001impact}. The processed dataset contains 26,490 samples of heartbeat that belong to 5 different types: N (normal beat), left (L) and right (R) bundle branch block, atrial (A) and ventricular (V) premature contraction. %An example heartbeat 
%of each class is visualized in \autoref{fig:ECGDataset}.
% \begin{figure}
% 	\centering
% 	\includegraphics[width=0.45\textwidth]{}
% 	\caption{Example heartbeats from the processed ECG dataset.}
% 	\label{fig:ECGDataset}
% \end{figure}
To train our network, the dataset is split into a train and test split %. This results in both the train and test split as 
as matrices of size $[13245, 1, 128]$, meaning that each contain 13,245 ECG samples and, each sample has one channel and 128 timesteps~\cite{abuadbba2020can}.

\paragraph{\textbf{PTB-XL:}}
According to~\cite{wagner2020ptb}, PTB-XL is the largest open-source ECG dataset since %as of%
2020. The dataset contains 12-lead ECG-waveforms from 21837 records of 18885 patients. Compared to PTB-XL, MIT-BIH only contains 2-lead ECG-waveforms obtained from 47 patients. Each waveform from PTB-XL %durates%
has a duration of 10 seconds. %in duration% %There are% 
Two sampling rates %being%
are used to collect the data: 100 Hz and 500 Hz. In our experiment, we employ the 100 Hz waveforms. Each 12-lead ECG waveform is associated with one or several classes out of five classes: normal (NORM), conduction disturbance (CD), myocardial infarction (MI), hypertrophy (HYP), and ST/T change (STTC). For waveforms that belong to multiple classes, we choose only the first one and remove the others for simplicity. %\autoref{fig:NORM-PTBXL} shows an example of a 12-lead normal heartbeat from the PTB-XL dataset. 
The dataset is then split into a 90\%-10\% train-test ratio. In total, we have a training split of size $[19267, 12, 1000]$, %where we have%
with 19,267 ECG waveform samples, %where each one has%
of 12 channels (or leads) and 1,000 timesteps each. The test split's size is %has the size of%
$[2163, 12, 1000]$.

 \begin{figure}
    \centering
    \begin{minipage}{0.5\textwidth}
        \centering
        \includegraphics[width=0.9\textwidth]{figures/local_training_plaintext_256.png} % first figure itself
        \caption{ MIT-BIH $[\textit{n}, 256]$} \label{fig:localTrain256}
    \end{minipage}\hfill
    \begin{minipage}{0.49\textwidth}
        \centering
        \includegraphics[width=0.9\textwidth]{figures/local_training_plaintext_512.png} % second figure itself
        \caption{MIT-BIH $[\textit{n}, 512]$} \label{fig:localTrain512}
    \end{minipage}\hfill
        \begin{minipage}{0.5\textwidth}
        \centering
        \includegraphics[width=0.9\textwidth]{figures/local_training_ptbxl.png} % second figure itself
        \caption{PTB-XL $[\textit{n}, 512]$}
	\label{fig:localTrainPTBXL}
    \end{minipage}
\end{figure}

\begin{figure}
    \centering
    \begin{minipage}{0.45\textwidth}
        \centering
        \includegraphics[width=0.75\textwidth]{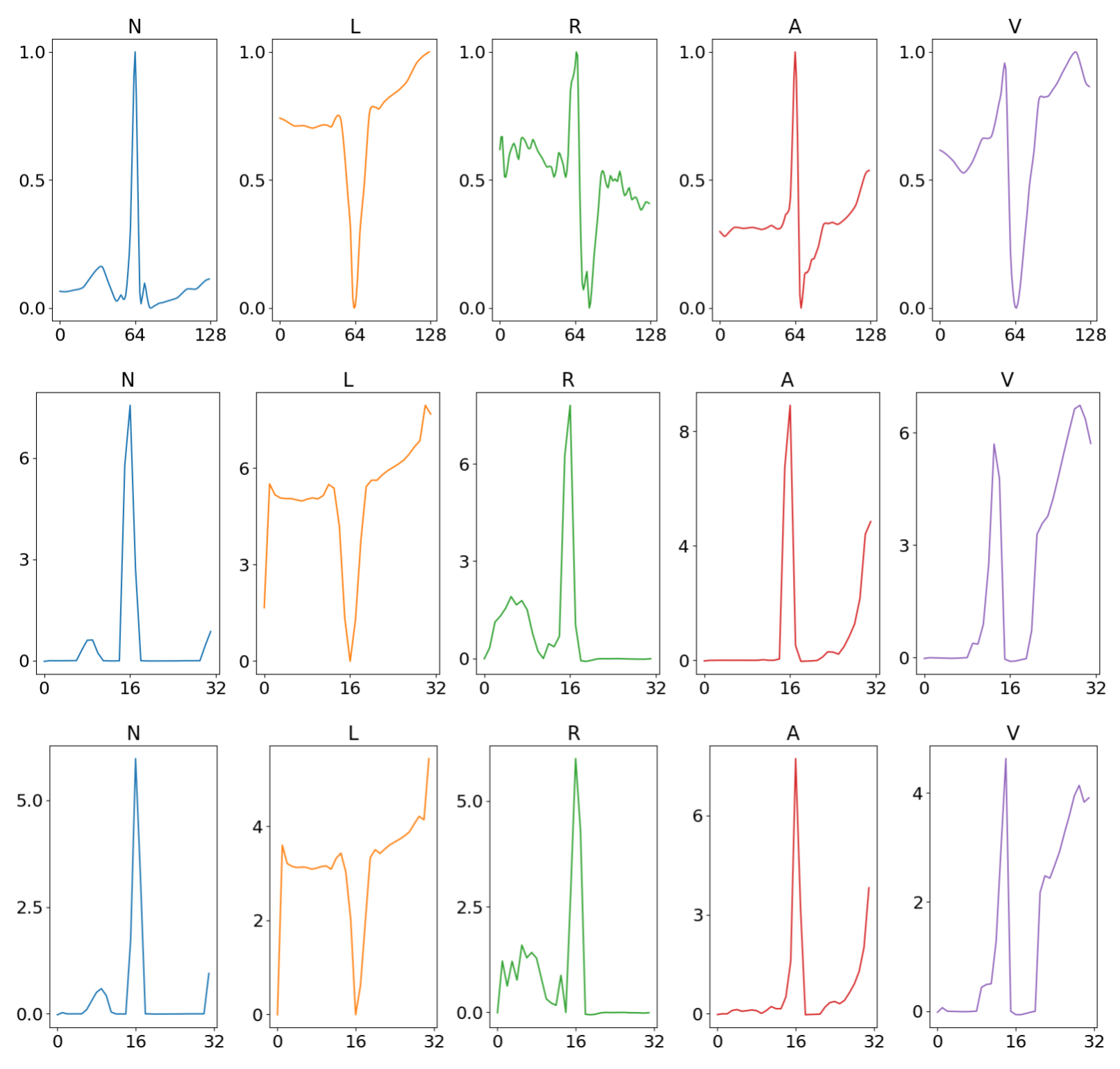} % first figure itself
        \caption{Top: client input data. Middle: output channels $M_1$. %of the model%.
	Bottom: output channels $M_2$.}
	\label{fig:visual_invertibility}
    \end{minipage}\hfill
    \begin{minipage}{0.54\textwidth}
        \centering
        \includegraphics[width=\textwidth]{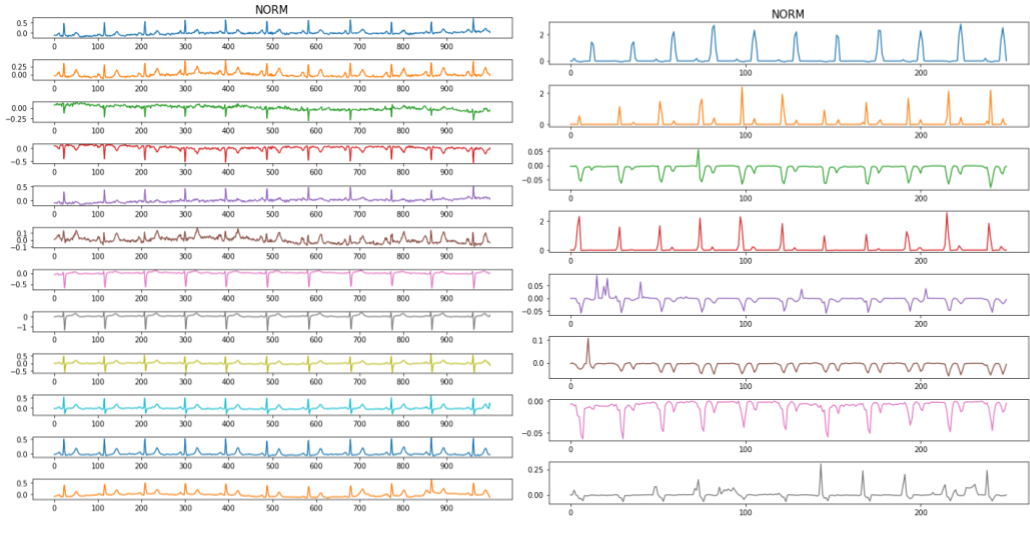}
	\caption{Visual invertibility of the model $M_3$ on the PTB-XL dataset. Left: input data (NORM class). Right: corresponding activation maps.}
	\label{fig:visual_invertibility_norm}
    \end{minipage}
\end{figure}

\end{document}